\definecolor{eclipseGreen}{RGB}{63,127,95}
\lstdefinelanguage{lambda}{
	morekeywords={
		let, in, with, case, of, as, filter, map, query, type, if, then, else, ->, head, default, nil, cons
	},
	sensitive=false,
	morestring=[b]"
	morecomment=[l]{//}
}
\lstdefinelanguage{dl}{
	morekeywords={
	},
	sensitive=false,
	morestring=[b]",
	morecomment=[l]{//}	
}
\theoremstyle{definition}
\newtheorem{theorem}{Theorem}
\newtheorem{lemma}{Lemma}
\theoremstyle{definition}
\newtheorem{definition}{Definition}
\theoremstyle{remark}
\newcommand\myeq{\stackrel{\mathclap{\normalfont\mbox{def}}}{=}}
\def\<#1>{\textbf{#1}\;}
\def\|#1|{\;\textbf{#1}\;}
\newcommand{\tg}[1]{\mbox{\normalfont{[}\ensuremath{\mathsf{#1}}{]}}}
\newcommand{\dnp}{\ \ \ \ \ }
\newcommand{\ir}[3]{{%
\noindent\def\arraystretch{1.5}
\ensuremath{%
\begin{array}[b]{c}
#2
\\ \hline
#3
\end{array}\hfill
\begin{tabular}[b]{r}\vspace{.5em}\tg{#1}
\end{tabular}
\medskip
}}}
\newcommand{\ax}[2]{%
\noindent
\ensuremath{%
\begin{array}{c}
#2
\end{array}\hfill
\begin{array}{r}{\tg{#1}}
\end{array}
\medskip
}}
\begin{document}

\setlength{\pdfpageheight}{\paperheight}
\setlength{\pdfpagewidth}{\paperwidth}



\begin{titlepage}
	\setlength{\parindent}{0pt}
	\rule{\textwidth}{2pt}\par
	\vspace{1.3cm}
	\sffamily{
	\Huge
	\raggedright
	\centering
	\MakeUppercase{
	$\mathbf{\lambda_{DL}}$: Syntax and Semantics}\par
	(Preliminary Report)\par
	}
	\rule{\textwidth}{1.2pt}\par
	\vspace{1.7cm}
\bfseries \MakeUppercase{Martin Leinberger}\par
University of Koblenz-Landau\par
Institute for Web Science and Technologies\par
 {mleinberger@uni-koblenz.de}\par
\bigbreak
	\bfseries \MakeUppercase{Ralf L\"{a}mmel}\par 
University of Koblenz-Landau\par
 The Software Languages Team\par
 {laemmel@uni-koblenz.de}\par
 \bigbreak
	\bfseries \MakeUppercase{Steffen Staab}\par 
University of Koblenz-Landau\par
 Institute for Web Science and Technologies\par
 {staab@uni-koblenz.de}\par
 \& Web and Internet Science Research Group\par
 University of Southampton\par
 {s.r.staab@soton.ac.uk}\par
\vfill
\medbreak
Martin Leinberger, Ralf L\"{a}mmel, Steffen Staab. $\mathbf{\lambda_{DL}}$: Syntax and Semantics (Preliminary Report).\par University of Koblenz-Landau. October 2016.\par
\end{titlepage}





\begin{abstract}
  Semantic data fuels many different applications, but is still
  lacking proper integration into programming languages. Untyped
  access is error-prone while mapping approaches cannot fully capture
  the conceptualization of semantic data. In this paper, we present
  $\lambda_{DL}$, a $\lambda$-calculus with a modified type system to
  provide type-safe integration of semantic data. This is
  achieved by the integration of description logics into the
  $\lambda$-calculus for typing and data access. It is centered around
  several key design principles. Among these are (1) the usage of
  semantic conceptualizations as types, (2) subtype inference for
  these types, and (3) type-checked query access to the data by both
  ensuring the satisfiability of queries as well as typing query results
  precisely in $\lambda_{DL}$. The paper motivates the use of a
  modified type system for semantic data and it provides the theoretic
  foundation for the integration of description logics as well as the
  core formal specifications of $\lambda_{DL}$ including a proof of
  type safety.
\end{abstract}



\keywords
Semantic data, Type systems, Typecase

\section{Introduction}
\label{sec:intro}

Semantic data allows for capturing knowledge in a natural manner. Its characteristics include the representation of conceptualizations inside the data and an entity-relation or graph-like description of data. Both, on their own and together, they allow for precisely specifying the knowledge represented within semantic data. A knowledge system manages semantic data and may infer new facts by logic inference. 
Different use cases are fueled by the semantic-data approach. The knowledge graphs of Google and Microsoft enhance Internet search. Wikidata~\cite{DBLP:journals/cacm/VrandecicK14} is an open source knowledge graph that stores structured data for Wikipedia. It consists of one billion statements and contains 1,148,230 different concepts and 2515 relations. The ontology defined by Schema.org\footnote{\url{https://schema.org/}} provides structure for data. This data is then used in search as well as personal assistants such as Google Now and Cortana. Google stores more than 3 trillion semantic statements crawled from the web. 
In the field of Life Sciences, semantic data was applied in the form of Bio2RDF\footnote{\url{http://bio2rdf.org/}}, providing 11 billion triples. Semantic data has also interlinked large, varied data sources, such as provided by Fokus\footnote{\url{https://www.fokus.fraunhofer.de/en}} containing more than 200,000 different data sets.
These examples demonstrate that semantic data models (e.g., RDF or OWL) are important for representing knowledge in complex use cases. In order to fully exploit the advantages of these data models, it is also necessary to facilitate their programmatic access and their integration into programming languages. 

As the running example, consider semantic data about music artists
formalized in the description logic \emph{ALCOI(D)}. Listing~\ref{ax:intro} shows everyone for which a $\text{\tt recorded}$ relation, that points to a entity of type $\text{\tt Song}$, exists is considered to be a $\text{\tt MusicArtist}$ (Line 2). $\text{\tt beatles}$ is of type $\text{\tt MusicArtist}$ (Line 4) and $\text{\tt machineGun}$ is a $\text{\tt Song}$ (Line 5). The object $\text{\tt hendrix}$ has recorded the song $\text{\tt machineGun}$ (Line 6) and was influenced by the object $\text{\tt beatles}$ (Line 7).
\begin{axioms}[caption={Initial example of semantic data.},label={ax:intro}]
// Conceptualization
\existsrecorded.Song $\sqsubseteq$ MusicArtist
// Graph data
beatles : MusicArtist
machineGun : Song
(hendrix, machineGun) : recorded
(hendrix,beatles) : influencedBy
\end{axioms}
The example shows several challenges we need to deal with when
integrating semantic data into a programming language. (1)
Conceptualizations rely on a mixture of nominal ($\text{\tt
  MusicArtist}$) and structural typing ($\exists\text{\tt
  recorded.Song}$). (2) It is also not uncommon to have a very general
or no conceptualization at all, as exemplified by the $\text{\tt
  influencedBy}$ role that expresses that $\text{\tt hendrix}$ has
been influenced by the $\text{\tt beatles}$. (3) Additional, implicit
statements may be derived by logical reasoning, e.g., in our running
example $\text{\tt hendrix:MusicArtist}$ can be inferred. 

Another challenge is not illustrated: (4) In real data sources, the
sheer size of potential types may become problem. It is practically
infeasible to explicitly convert all 1,148,230 different concepts of
Wikidata into types of a programming language.

Integration of data models into programming languages can be achieved
in different ways. The three most important are (1) via generic types,
(2) via a mapping to the type system of a programming language, or (3)
by using a custom type system.  A generic approach (1) can represent
semantic data using types such as \textsf{GraphNode} or \textsf{Axiom}
(cf.~\cite{DBLP:journals/semweb/HorridgeB11}). While this approach can
represent anything the data can model, it does not leverage static
typing: such generic representations are not error-checked. Mapping
approaches (2), such as~\cite{DBLP:conf/seke/KalyanpurPBP04} aim at
mapping the data model to the type system of the programming language
so that static typing is leveraged. However, the mixing of structural
and nominal typing, inferred statements, and a high number of concepts
worth mapping are problematic.

{\sloppypar

\paragraph{Contribution of the paper}

We therefore propose a third, a novel approach: A type system designed
for semantic data (3). In this paper, we present $\lambda_{DL}$, a
functional language for working with knowledge systems. $\lambda_{DL}$
uses concept expressions such as $\text{\tt MusicArtist}$ and
$\exists\text{\tt recorded.Song}$ 
as types. This ensures that every conceptualization can be represented
in the language and allows for typing values precisely. It avoids
pitfalls of other approaches by forwarding typing and subtyping
judgments to the knowledge system, thereby allowing facts to be
considered only if required. Lastly, the language contains a simple
querying mechanism based on description logics. The querying mechanism
allows for checking of satisfiability of queries as well as for typing
the query results in the programming language. As a result,
$\lambda_{DL}$ provides a type-safe method of working with semantic
data.

}

To highlight a simple kind of error that type checking can catch,
consider a function $\text{\tt f}$ that takes $\exists\text{\tt influencedBy.}\top$ as input. 
In other words, the functions accepts entities for which an $\text{\tt influencedBy}$ relation exists, irregarding of the type of entity that relation points to.
Using a query-operator that searches for entities in the data, a developer might simply query for music artists because he has seen that $\text{\tt hendrix}$ has an influence. Applying any value of the result set to the function $\text{\tt f}$ can cause runtime-errors,as not all music artists have a known influence. Typing in $\lambda_{DL}$ is precise enough to detect such errors (see Listing \ref{code:rejected}).
\begin{code}[caption={Rejected code --- music artist is not a subtype of $\exists\text{\tt influencedBy.}\top$.},label=code:rejected]
let f = fun(x:\existsinfluencedBy.\top) . x.influencedBy in
	f (head (query MusicArtist))
\end{code}

\paragraph{Road-map of the paper}

The remaining paper is organized as follows. In
Section~\ref{sec:descriptionLogics}, we introduce description logics
as the theoretic foundation of semantic data. In
Section~\ref{sec:example}, we illustrate $\lambda_{DL}$ with an
extension of the running example and an informal view on the
calculus. In Section~\ref{sec:language}, we describe the core language
and its evaluation rules. In Section~\ref{sec:typesystem}, we
describe the type system. In Section~\ref{sec:soundness}, we provide
a proof of type soundness. In Section~\ref{sec:relatedWork}, we
examine related work.  In Section~\ref{sec:conclusion}, we conclude
the paper including a discussion of future work. Additionally, we shortly describe the prototypical implementation of $\lambda_{DL}$ in the appendix. Further information about $\lambda_{DL}$ is available at \url{http://west.uni-koblenz.de/de/lambda-dl}.

 
\section{Description Logics}
\label{sec:descriptionLogics}

Semantic data is often formalized in the RDF data model or in the more expressive Web Ontology Language (OWL\footnote{\url{https://www.w3.org/OWL/}}). Formal theories about the latter are grounded in research on description logics. Description logics is a family of logical languages for describing conceptual knowledge and graph data. All description logic languages are sub-languages of first-order predicate logic. They are defined to allow for decidable or even PTIME decision procedures. 
Their usefulness for modeling semantic data has been shown with such diverse use cases as reasoning on UML class diagrams~\cite{DBLP:journals/ai/BerardiCG05}, semantic query optimization on object-oriented database systems~\cite{Beneventano:2003:DLS:762471.762472}, or improving database access through abstraction~\cite{DBLP:conf/sebd/CalvaneseGLLPR07}.



\paragraph*{Syntax and Semantics} Semantic data, also called a
knowledge base, comprises of a set of description logics axioms that
are composed using a signature $\mathit{Sig}(\mathcal{K})$ and a set
of logical and concept operators and comparisons. A signature
$\mathit{Sig}$ of a knowledge base $\mathcal{K}$ is a triple
$\mathit{Sig}(\mathcal{K})=(\mathcal{A}, \mathcal{Q}, \mathcal{O})$
where $\mathcal{A}$ is a set of concept names, $\mathcal{Q}$ is a set
of role names, and $\mathcal{O}$ is a set of object names.
DL uses Tarskian-style, interpretation-based semantics. An
interpretation $\mathcal{I}$ is a pair consisting of a non-empty
universe $\Delta^{\mathcal{I}}$ and an interpretation function
$\cdot^{\mathcal{I}}$ that maps each object $a,b\in\mathcal{O}$ to a
element of the universe. Furthermore, it assigns each concept name $A
\in \mathcal{A}$ a set $A^{\mathcal{I}} \subseteq
\Delta^{\mathcal{I}}$ and each role name $Q \in \mathcal{Q}$ to a
binary relation $Q^{\mathcal{I}} \subseteq \Delta^{\mathcal{I}} \times
\Delta^{\mathcal{I}}$. In our running example, the signature of
Listing \ref{ax:intro} contains the concepts\footnote{As is common in
  description logics research, we use ``concept C'' to refer to both the concept name $C$ and the interpretation of this concept name $C^I$, unless the distinction between the two is explicitly required. Likewise, we do for role (names) and object (names).} 
$\text{\tt MusicArtist}$ and $\text{\tt Song}$, the roles $\text{\tt recorded}$ and $\text{\tt influencedBy}$ as well as the objects $\text{\tt beatles}$, $\text{\tt hendrix}$, and $\text{\tt machineGun}$. An interpretation $\mathcal{I}$ could map objects like $\text{\tt hendrix}$ to their real-life counterparts, e.g., the artist Jimi Hendrix. 
Furthermore, the interpretation of concept $\text{\tt MusicArtist}$ might be $\text{\tt MusicArtist}^\mathcal{I} = \{ \text{\tt hendrix}, \text{\tt beatles} \}$, and the interpretation of $\text{\tt Song}$ might be $\text{\tt Song}^\mathcal{I} = \{ \text{\tt machineGun} \}$. The interpretation of the $\text{\tt recorded}$ role might be $\text{\tt recorded}^\mathcal{I} \allowbreak=\allowbreak \{ (\text{\tt hendrix},\allowbreak\text{\tt machineGun}) \}$ and $\text{\tt influencedBy}^\mathcal{I}=\{ (\text{\tt hendrix},\allowbreak\text{\tt beatles}) \}$.

Given these element names, complex expressions such as shown in Listing \ref{ax:intro} can be built. For the course of the paper, the specific description logics dialect needed to cover all necessary constructs is \emph{ALCOI}, consisting of the most commonly used \emph{Attributive Language with Complements} plus the addition of nominal concept expressions and inverse role expressions. Table \ref{tbl:roleExpressions} summarizes syntax and semantics of role expressions represented through the metavariable $R$. A role expression is either a atomic role or the inverse of a role expression.

\begin{table}[h!]
	\centering
	\begin{tabular}{l l l}
		\hline \\ [-2.0ex]
		Role Expression & Syntax & Semantics \\ \hline \hline \\ [-2.0ex]
		Atomic Role & $Q$ & $Q^{\mathcal{I}} \subseteq \Delta^{\mathcal{I}} \times \Delta^{\mathcal{I}}$ \\ \hline \\ [-2.0ex]
		Inverse & $R^{-}$ & $\{ (b,a) \in \Delta^{\mathcal{I}} \times \Delta^{\mathcal{I}} \vert (a,b) \in R^{\mathcal{I}} \}$ \\ \hline \\ [-2.0ex]
	\end{tabular}
	\caption{Role expressions and associated semantics.}
	\label{tbl:roleExpressions} 
\end{table}

Concept expressions are composed from other concept expressions and may also include role expressions. Concept expressions, represented through the metavariables $C$ and $D$, are either atomic concepts, $\top$, $\bot$ or the negation of a concept. Concept expressions can also be composed from intersection or through existential and universal quantification on a role expression. An example of such a concept expression from Listing \ref{ax:intro} is the concept $\exists\text{\tt recorded.Song}$ that describes the set of objects, which have recorded at least one song. Lastly, it is also possible to define a concept by enumerating its objects. This constitutes a nominal type in description logics and allows the description of sets such as the one only containing $\text{\tt hendrix}$ and the $\text{\tt beatles}$ through the expression $\{\text{\tt hendrix}\} \sqcup \{\text{\tt beatles}\}$. Table \ref{tbl:conceptExpressions} summarizes the syntax and semantics of concept expressions.

\begin{table}[h!]
\centering
\begin{tabular}{l l l}
	\hline \\ [-2.0ex]
	Concept Expression & Syntax & Semantics \\ \hline \hline \\ [-2.0ex]
	Nominal concept & $\{\;a\;\}$ & $\{ a^\mathcal{I} \}$ \\ \hline \\ [-2.0ex]
	Atomic concept & $A$ & $A^{\mathcal{I}} \subseteq \Delta^{\mathcal{I}}$ \\ \hline \\ [-2.0ex]
	Top & $\top$ & $\Delta^{\mathcal{I}}$ \\ \hline \\ [-2.0ex]
	Bottom & $\bot$ & $\emptyset$ \\ \hline \\ [-2.0ex]
	Negation & $\neg C$ & $\Delta^{\mathcal{I}}\setminus C$ \\ \hline \\ [-2.0ex]
	Intersection & $C \sqcap D$ & $C^\mathcal{I} \cap D^\mathcal{I}$ \\ \hline \\ [-2.0ex]
	Union & $C \sqcup D$ & $C^\mathcal{I} \cup D^\mathcal{I}$ \\ \hline \\ [-2.0ex]
	Existential Quantification & $\exists R.C$ & $\{ a^\mathcal{I} \in \Delta^\mathcal{I} \vert \exists b^\mathcal{I} : (a^\mathcal{I},b^\mathcal{I}) $\\ & &$ \in R^\mathcal{I} \wedge b^\mathcal{I} \in C^\mathcal{I} \}$ \\ \hline \\ [-2.0ex]
	Universal Quantification & $\forall R.C$ & $\{ a^\mathcal{I} \in \Delta^\mathcal{I} \vert \forall b^\mathcal{I}:(a^\mathcal{I},b^\mathcal{I}) $\\ & &$ \in R^\mathcal{I} \wedge b^\mathcal{I} \in C^\mathcal{I}\}$ \\ \hline \\	
\end{tabular}
\caption{Concept expressions and associated semantics.} 
\label{tbl:conceptExpressions}
\end{table}

Furthermore, in the context of programming with semantic data, it makes sense to add additional data types such as string or integer. We then arrive at the language \emph{ALCIO(D)}, the language \emph{ALCIO} plus the addition of data types for constructing knowledge bases. In the OWL standard, the use of XSD\footnote{\url{https://www.w3.org/TR/xmlschema-2/}} data types is common. We therefore also include XSD data types wherever it is appropriate. As an example, consider the concept expression $\exists\text{\tt artistName.xsd:string}$ describing the set of all objects having an artist name that is a string. However, as the integration of such smaller, closed set of data types can be achieved via mappings to appropriate types in the programming language, we do not go into details about them in the remainder of the paper. 

Given such concept (and datatype) expressions, we may now define semantic statements, also called a knowledge base, as pointed out before. A knowledge base $\mathcal{K}$ is a pair $\mathcal{K = (T,A)}$ consisting of the set of terminological axioms $\mathcal{T}$, the conceptualization of the data and the set of assertional axioms $\mathcal{A}$, the actual data. Schematically, a knowledge base can express that two concepts are either equivalent or that two concepts are in a subsumptive relationship. In terms of actual data, objects can either express that belong to a certain concept or that they are related to another object via a role. Furthermore, it is possible to axiomatize that two objects are equivalent. Table \ref{tbl:axiomConstruction} summarizes syntax and semantics of possible axioms in the knowledge base.

\begin{table}[h!]
\centering
\begin{tabular}{lll}
	\hline \\ [-2.0ex]
	Name & Syntax & Semantics \\ \hline \hline \\ [-2.0ex]
	Concept inclusion & $C \sqsubseteq D$ & $C^{\mathcal{I}} \subseteq D^{\mathcal{I}}$ \\ \hline \\ [-2.0ex]
	Concept equality & $C \equiv D$ & $C^{\mathcal{I}} = D^{\mathcal{I}}$ \\ \hline \hline \\ [-2.0ex]
	Concept assertion & $a : C$ & $a^\mathcal{I} \in C^\mathcal{I}$ \\ \hline \\ [-2.0ex]
	Role assertion & $(a,b):R$ & $(a^{\mathcal{I}}, b^{\mathcal{I}}) \in R^{\mathcal{I}}$ \\ \hline \\ [-2.0ex]
	Object equivalence & $a \equiv b$ & $a^{\mathcal{I}} = b^{\mathcal{I}}$ \\ \hline \\ [-2.0ex]
\end{tabular}
\caption{Terminological and assertional axioms.} 
\label{tbl:axiomConstruction}
\end{table}

Even weak axiomatizations such as RDFS\footnote{RDF Schema, one of the weakest forms of terminological axioms.} allow for the definition of domains and ranges of roles used in the ontology. As shown in Figure \ref{fig:dlAbbrev}, Domain and Range definition can be defined as abbreviations of axioms built according to Table \ref{tbl:axiomConstruction}.

\begin{figure}[h!]
	\begin{align*}
		\text{Domain}(R,C) \; & \myeq \; \exists R.\top \sqsubseteq C\\
		\text{Range(R,C)} \; & \myeq \; \top \sqsubseteq \forall R.C
	\end{align*}
\caption{Syntactical abbreviations for DL.}
\label{fig:dlAbbrev}
\end{figure}

Using our running example, we can now define a more sophisticated
knowledge base (Listing \ref{ax:example}). We assume everyone who has
recorded a song to be a music artist, but not all music artists have
recorded one (Line 2). Music artists who have been played at a radio
station however must have recorded a song (Line 3--4). Music groups
are a special kind of music artists (Line 5). Every music artist has
an artist name, which is always of type xsd:string (Line 6 and 7). As
might happen when semantic data is crawled from the Web, a role like
$\text{\tt influenceBy}$ might not be defined in the schema. Thus, it
remains a role that is not restricted by any terminological axiom. The
actual data includes descriptions of the $\text{\tt beatles}$, which
are a music group (Line 9), $\text{\tt machineGun}$ which is a song
(Line 10) $\text{\tt coolFm}$ which is a radio station (Line
11). $\text{\tt machineGun}$ has been recorded by $\text{\tt hendrix}$
(Line 12), who has been $\text{\tt influencedBy}$ the $\text{\tt
  beatles}$ (Line 13). Lastly, we know that both, $\text{\tt hendrix}$
and $\text{\tt beatles}$ have been played by $\text{\tt coolFm}$ (Line
14--15). It is not explicitly stated that $\text{\tt hendrix}$ is a
music artist. Furthermore, even though we know that the music group
$\text{\tt beatles}$ has been played at $\text{\tt coolFm}$, we do not know any song that they recorded.  

\begin{axioms}[caption={Advanced example of semantic data.},label=ax:example]
// Conceptualization
\existsrecorded.Song $\sqsubseteq$ MusicArtist
MusicArtist $\sqcap$ \existsplayedAt.RadioStation $\sqsubseteq$
	\existsrecorded.Song
MusicGroup $\sqsubseteq$ MusicArtist
MusicArtist $\sqsubseteq$ \existsartistName.\top
Range(artistName, xsd:String)
// Graph data
beatles : MusicGroup
machineGun : Song
coolFm : RadioStation
(hendrix, machineGun) : recorded
(hendrix, beatles) : influencedBy
(hendrix, coolFm) : playedAt
(beatles, coolFm) : playedAt
(hendrix, "Jimmy Hendrix") : artistName
(beatles, "The Beatles") : artistName
\end{axioms}

As illustrated by the example, ALCIO(D) is a description logics language which is already rather expressive to describe complex concept and object relationships. As we want to focus on the ``essence of programming with semantic data'', we refrain from using more powerful languages, such as OWL2DL, as this would distract from the core contributions of this paper without significantly changing its methods.

\paragraph*{Inference} 
In terms of inference, interpretations have to be reconsidered. Axioms built according to Table \ref{tbl:axiomConstruction} may or may not be true in a given interpretation. An interpretation $I$ is said to satisfy an axiom $F$, if its considered to be true in the interpretation. The notation $I\models F$ is used to indicate this. An interpretation $I$ satisfies a set of axioms $\mathcal{F}$, if $\forall F\in\mathcal{F}:I\models F$. An interpretation that satisfies a knowledge base $\mathcal{K =(T,A)}$, written $I\models\mathcal{K}$ if $I\models\mathcal{T}$ and $I\models\mathcal{A}$, is also called a model. For an axiom to be inferred from the given facts, the axiom needs to be true in all models of the knowledge base (see Def. \ref{def:inference}).
\begin{definition}[Inference]\label{def:inference}
Let $\mathcal{K=(T,A)}$ be a knowledge base, $F$ an axiom and $\mathcal{I}$ the set of all interpretations. $F$ is inferred, written $\mathcal{K}\models F$, if $\forall I\in\mathcal{I}:I\models\mathcal{K}\;\text{then}\;I\models F$.
\end{definition}
An example of this is the axiom $\text{\tt hendrix:MusicArtist}$. $\text{\tt hendrix}$ has recorded a song and must therefore be element of $\exists \text{\tt recorded.Song}$. As $\exists \text{\tt recorded.Song} \sqsubseteq \text{\tt MusicArtist}$ must be true in all models, $\text{\tt hendrix}$ must also be element of $\text{\tt MusicArtist}$. A knowledge system might introduce anonymous objects to fulfill the explicitly given axioms. Take the object $\text{\tt beatles}$ as an example. The object is a music artist and has been played in the radio. Therefore, according to Lines 3--4 in the example, they must have recorded a song. However, the knowledge system does not know any song recorded by them. It will therefore introduce an anonymous object representing this song in order to satisfy the axioms. 

\paragraph*{Queries}
Interaction between the programming language and the knowledge system can be realized via querying. Two basic forms of queries can be distinguished. Queries that check whether an axiom is true have already been introduced in the previous paragraph ($\mathcal{K}\models F)$. A more expressive form of querying introduces variables, to which the knowledge system responds with unifications for which the axiom is true. Querying introduces variables, to which the knowledge system responds with unifications for which the axiom is known to be true (see Def. \ref{def:querying}).
\begin{definition}[Querying with variables]\label{def:querying}
Let $\mathcal{K}$ be a knowledge base and $C$ a concept expression. The set of all objects for which $a:C$ is true is then $\{ \mathit{?X} \vert \mathcal{K}\models \mathit{?X}:C\}$.
\end{definition}
As an example, consider the query $\mathcal{K} \models \mathit{?X}:\text{\tt MusicArtist}$, the variable $\mathit{?X}$ is unified with all objects that belong to the concept $\text{\tt MusicArtist}$. However, this form of query can be problematic as, depending on the knowledge system, an infinite number of unifications might exist. Consider the knowledge base in Listing \ref{ax:infinite}. A person is someone who has a father who is again a person (Line 1). An object \text{\tt someone} is defined to be a person (Line 2). 
\begin{axioms}[caption={Infinitely large knowledge system.},label=ax:infinite]
Person $\sqsubseteq$ $\exists$hasFather.Person
someone : Person
\end{axioms}
If $\text{\tt someone}$ is a person, then he must have a father which is a anonymous object and a person himself, again implying that this anonymous object has a father. A query $\mathcal{K} \models \mathit{?X}:\text{\tt Person}$ therefore yields an infinite number of unifications. We therefore use a simple form of so called DL-safe queries (cf.~\cite{DBLP:journals/ws/MotikSS05}), which restrict unifications to objects defined in the signature (see Def. \ref{def:dl-safe-querying}). 
\begin{definition}[DL-safe queries]\label{def:dl-safe-querying}
Let $\mathcal{K}$ be a knowledge base, $\mathit{Sig}(\mathcal{K})=(\mathcal{A}, \mathcal{Q}, \mathcal{O})$ its signature and $C$ a concept expression. The set of all objects for which $a:C$ is true and that are not anonymous can be queried by $\{ \mathit{?X} \vert \mathcal{K}\models \mathit{?X}:C \wedge \mathit{?X}\in\mathcal{O}\}$.
\end{definition}
In this case of the example shown in Listing \ref{ax:infinite}, only the object $\text{\tt someone}$ would be returned, even though anonymous objects are considered for inferencing. 

\paragraph*{Open World and No Unique Name assumption}
Semantic data employs an open world semantics. Axioms are $\mathit{true}$ if they are true in all models of the knowledge base. Likewise, an axiom is $\mathit{false}$ if they are false in all models of the knowledge. Contrary to a closed world, axioms that are true in some models, but false in others are not false but rather $\mathit{unknown}$. This allows the modeling of incomplete data without inconsistencies. Furthermore, there is no unique name assumption. Two syntactically different objects might be equivalent. As an example, consider the two objects $\text{\tt prince}$ and $\text{\tt theArtistFormerlyKnownAsPrince}$. While they are syntactically different, they might be semantically equivalent.


\section{$\mathbf{\lambda_{DL}}$ in a nutshell}
\label{sec:example}

Developing applications for knowledge systems, as introduced in the previous section, is difficult and error-prone. $\lambda_{DL}$ has been created to achieve a type-safe way of programming with such data sources.

\subsection{Key design principles}

\paragraph*{Concepts as types} 
Type safety can only be achieved if terms are typed precisely. This is only possible if the conceptualizations of semantic data are usable in the programming language. Therefore, concept expressions must be seen as types in the language. 
\paragraph*{Subtype inferences}
Due to the large number of potential concepts, it is infeasible to compute subtype relations beforehand. Therefore, the facts about subsumptive relationships between concepts must be added to the system during the type checking process by forwarding these checks to the knowledge system.  
\paragraph*{Typing of queries}
To avoid runtime errors, queries must be properly
type-checked. Queries can be checked in two ways: First, unsatisfiable
queries must be rejected. Queries for which no possible A-Box instance can produce a result are therefore detected and rejected. 
Second, usage of queries must be type safe, meaning that the query result must be properly typed. Queries always return lists in $\lambda_{DL}$.
\paragraph*{DL-safe queries}
A knowledge system might introduce anonymous objects to satisfy axioms. In the worst case, this can lead to infinitely large query results. However, very little information can be gained of such objects aside from their existence. As shown in Def.~\ref{def:dl-safe-querying}, $\lambda_{DL}$ relies on a simplified form of DL-safe queries. Queries are enforced to be finite by only allowing unifications with known objects. However, this may also lead to empty result sets for queries. 
\paragraph*{Open-world querying}
When looking at inferencing, axioms may be $\mathit{true}$, $\mathit{false}$ or $\mathit{unkown}$. For simplicity, $\lambda_{DL}$ considers axioms to be true only if the axiom is $\mathit{true}$ in all models. In other cases, the axiom is considered false. While this view is close to a developers expectation, it also introduces the side effect that union of two queries such as $\textbf{\tt query}\;C$ and $\textbf{\tt query}\;\neg C$ does not yield all objects. For some objects, it is simply unknown whether they belong to either $C$ or $\neg C$.

\begin{figure*}[tb!]
\parbox{0.5\textwidth}{
	\begin{tabular}{l p{4.5cm}}
		$t$ ::= & \hfill ($\mathit{terms}$) \\
		& \textbf{let} $x=t$ \textbf{in} $t$ \hfill (let binding) \\
		& | \textbf{fix} $t$ \hfill (fixed point of $t$) \\
		& | $t$ $t$ \hfill (application) \\
		& | \textbf{if} $t$ \textbf{then} $t$ \textbf{else} $t$ \hfill (if-then-else) \\
		& | \textbf{cons} $t$ $t$ \hfill (list constructor) \\
		& | \textbf{null} $t$ \hfill (test for empty list) \\
		& | \textbf{head} $t$ \hfill (head of a list) \\
		& | \textbf{tail} $t$ \hfill (tail of a list) \\
		& | \textbf{query} $C$ \hfill (query) \\
		& | $t.R$ \hfill (projection) \\
		& | \textbf{case} $t$ \textbf{of} \hfill (typecase) \\
		& $\;\;\;\overline{\mathit{case}}$ \hfill (typecases) \\
		& $\;\;\;$\textbf{default} $t$ \hfill (default case) \\
		& | $t$ \textbf{=} $t$ \hfill (equivalence) \\
		& | $x$ \hfill(identifier) \\
		& | $v$ \hfill (value) \\
		& \\
		$v$ ::= & \hfill ($\mathit{values}$) \\
		& $a$ \hfill (object) \\
		& | \textbf{nil[}$T$\textbf{]} \hfill (empty list) \\
		& | \textbf{cons} $v$ $v$ \hfill (list constructor) \\
		& | $\lambda(x:T).t$ \hfill (abstraction) \\
		& | $p$ \hfill (primitive value) \\
	\end{tabular}
}
\parbox{0.5\textwidth}{
	\begin{tabular}{l p{4.5cm}}
		& \hfill \\
		$p$ ::= & \hfill ($\mathit{primitive\;values}$) \\
		& true \hfill (true) \\
		& | false \hfill (false) \\
		& \\
		$\mathit{case}$ ::= & \textbf{type} $C$ \textbf{as} $x$ -> $t$ \hfill ($\mathit{type case}$)\\
		& \\
		$T$ ::= & \hfill ($\mathit{types}$) \\
		& $C$ \hfill (concept type) \\
		& | $T \rightarrow T$ \hfill (function type) \\
		& | $T$ list \hfill (list type) \\
		& | $\Pi$ \hfill (primitive types) \\
		& \\
		$\Pi$ ::= & \hfill ($\mathit{primitive~types}$) \\
		& bool \hfill (boolean) \\
		& \\
		$\Gamma$ ::= & \hfill ($\mathit{context}$) \\
		& $\emptyset$ \hfill (empty context) \\
		& | $\Gamma,x:T$ \hfill (type binding) \\
	\end{tabular}

}
\caption{Syntax (terms, values, types) of $\lambda^{DL}$.}
\label{fig:syntax}
\end{figure*}

\begin{figure}[ht!]
\begin{align*}
\<letrec> x:T_1=t_1 \|in| t_2 \; & \myeq \; \<let> x = \|fix| (\lambda x:T_1.t_1) \|in| t_2
\end{align*}
\caption{Syntactical abbreviations of $\lambda^{DL}$.}
\label{fig:programmingAbbrev}
\end{figure}

\subsection{Example use case}

Consider an application that works on the knowledge system defined in Listing \ref{ax:example}. Four necessary functions should be implemented: First, the application should query for all music artists that have recorded a song. Second, the application should provide a mapping from a music artist to the list of their songs. Third, a mapping from a music artist to his artist name must be created. Fourth, the application should display all influences of an artist --- therefore a mapping from a music artist to his influences is needed. However, these influences should also be human-readable, meaning that they should also be mapped to their name. 

The first requirement is implemented by the querying mechanism in $\lambda_{DL}$. The necessary list of music artists that have recorded at least one song can be queried using $\text{\tt MusicArtist} \sqcap \exists\text{\tt recorded.Song}$ (see Listing \ref{code:query}). Applied to a knowledge system working on the facts in Listing \ref{ax:example}, this yields a list containing both $\text{\tt hendrix}$ and $\text{\tt beatles}$. This expression is typed by the concept expression used in the querying, assigning a type of $(\text{\tt MusicArtist} \sqcap \exists\text{\tt recorded.Song})\;\text{list}$ to the evaluation result.
\begin{code}[caption={Querying for music artists that have recorded a song.},label={code:query}]
query MusicArtist $\sqcap$ \existsrecorded.Song
\end{code}
Mapping a member of this list to his recorded songs can be done using role projections. The input type for such a mapping function is $\exists\text{\tt recorded.Song}$ which is a super type of $\text{\tt MusicArtist} \sqcap \exists\text{\tt recorded.Song}$. Listing \ref{code:mappingSongs} shows the code for the mapping function. As mentioned before, for the object $\text{\tt beatles}$, the semantic data does not contain any recorded songs, even though such a song must exist. The anonymous object introduced by the knowledge system is removed and an empty list is returned. Yet, the developer knows that an anonymous object must exist and that the knowledge system might know this song at some point in the future --- otherwise typing would have rejected the function application.  
\begin{code}[caption={Mapping to the recordings.},label={code:mappingSongs}]
let getRecordings = fun(a:\existsrecorded.Song). 
	a.recorded
\end{code}
A function mapping a music artist to his name is again built by role projections. As our knowledge systems claims that every music artist has an artist name (Listing \ref{ax:example}, line 5), the input type for this function can be the music artist concept. Additionally, the knowledge system states that the returned list of values are all of type string. We can therefore simply take the head of the returned list. Listing \ref{code:mappingName} shows the code of the mapping function. However, this code also shows a problem $\lambda_{DL}$ still faces --- if the knowledge system would not know the name of an artist, the resulting list would be empty and the code would still produce a runtime error. 
\begin{code}[caption={Mapping a artist to his name.},label={code:mappingName}]
let getArtistName = fun(a:\existsartistName.xsd:string).
	head (a.artistName)
\end{code}
The last requirement, mapping a music artist to his influences introduces casting, as music artists are not in a direct subtype relation to $\text{\tt influencedBy.}\top$. This casting is important, as simply allowing the projection could cause runtime errors if, e.g., used on the object $\text{\tt beatles}$. $\lambda_{DL}$ provides a type dispatch for this use case. Listing \ref{code:musicArtistInfluences} shows the code for this function. In case that the argument of the function is of type $\text{\tt influencedBy.}\top$, the actual mapping function is applied to the value --- otherwise, an empty list is returned. 
\begin{code}[caption={Casting a music artist to $\text{\tt influencedBy.}\top$.}, label={code:musicArtistInfluences}]
let getArtistInfluences = fun(artist:MusicArtist).
	case artist of
		type \existsinfluencedBy.\top as x -> getInfluences x
		default nil
\end{code} 
The function computing the actual influences can use a projection and then a mapping to a human-readable name. However, this human-readable name is problematic. Due to the weak schematic restrictions of the $\text{\tt influencedBy}$ role
the code must proceed on a case by case basis. If the influence is a music artist, the projection to the human-readable string is known. Otherwise, the influence should be ignored. Listing \ref{code:getInfluences} shows the complete code for the function.    
\begin{code}[caption={Mapping influences to their human-readable representations.},label={code:getInfluences}]
let getInfluences = fun(obj:\existsinfluencedBy.\top).
	let toName = fun(x:\existsinfluencedBy^-.\top).
		case x of
			type MusicArtist as y -> getName y
			default "no influence known"
	in letrec getNames:(\existsinfluencedBy^-.\top list 
		-> string list) =
		fun(source:\existsinfluencedBy^-.\top list) . 
			if (null source) 
				then nil[string]
				else cons (toName (head source)) 
					(getNames (tail source))
	in
		getNames obj.influencedBy
\end{code}


\begin{figure}[tb!]
\centering
\begin{boxedminipage}{0.5\textwidth}
	\ax{\text{E-QUERY}}
		{\<query> C \rightarrow \sigma(\{\mathit{?X}\;\vert\;\mathit{?X}\in\mathcal{O}\wedge\mathcal{K} \models \mathit{?X}:C\})\\=(\textbf{cons}\;a_1\;...)}

	\ax{\text{E-PROJV}}
		{a.R \rightarrow \sigma(\{\mathit{?X}\;\vert\;\mathit{?X}\in\mathcal{O}\wedge\mathcal{K} \models (a,\mathit{?X}):R\})\\=(\textbf{cons}\;b_1\;...)}

	\ir{\text{E-PROJ}}
		{t_1 \rightarrow t'_1}
		{t_1.R \rightarrow t'_1.R}

	\ir{\text{EQ-NOMINAL-TRUE}}
		{\mathcal{K} \models a \equiv b}
		{a\textbf{=}b\rightarrow \text{true}}

	\ir{\text{EQ-NOMINAL-FALSE}}
		{\mathcal{K} \not\models a \equiv b}
		{a \textbf{=} b \rightarrow \text{false}}

	\ax{\text{EQ-PRIM-TRUE}}
		{p_1 \textbf{=} p_1 \rightarrow \text{true}}

	\ir{\text{EQ-PRIM-FALSE}}
		{p_1 \neq p_2}
		{p_1 \textbf{=} p_2 \rightarrow \text{false}}

	\ir{\text{E-EQ1}}
		{t_1 \rightarrow t'_1}
		{t_1 = t_2 \rightarrow t'_1 = t_2}

	\ir{\text{E-EQ2}}
		{t_2 \rightarrow t'_2}
		{v_1 = t_2 \rightarrow v_1 = t'_2}
\end{boxedminipage}
\caption{Reduction rules related to KB.}
\label{fig:kbSemantics}
\end{figure}

\begin{figure*}[tb!]
\centering
\begin{boxedminipage}{0.7\textwidth}
\ax{\text{E-DISPATCH-DEF}}
	{\<case> a \|of| \<default> t_0 \rightarrow t_0}
\noindent\def\arraystretch{1.5}
	\ensuremath{%
	\begin{array}[b]{lcr}
	& \mathcal{K}\models a:C_1 &
	\\ \hline
	\<case> a \|of| & & \\ 
	\;\;\;\<type> C_1 \|as| x_1 \|->| t_1 & & \\
	\;\;\;... & \rightarrow & \lbrack x_1 \mapsto a \rbrack t_1\\
	\;\;\;\<default> t_{n+1} & &  \\
	\end{array}\hfill
	\begin{tabular}[b]{r}\vspace{5.5em}\tg{\text{E-DISPATCH-SUCC}}
	\end{tabular}
	\medskip
	}
\noindent\def\arraystretch{1.5}
	\ensuremath{%
	\begin{array}[b]{lcl}
	& \mathcal{K}\not\models a:C_1 &
	\\ \hline
	\<case> a \|of| & & \<case> a \|of| \\ 
	\;\;\;\<type> C_1 \|as| x_1 \|->| t_1 & &\;\;\;\<type> C_2 \|as| x_2 \|->| t_2 \\
	\;\;\;\<type> C_2 \|as| x_2 \|->| t_2 & \rightarrow & \;\;\;... \\ 
	\;\;\;... &  &  \;\;\;\<default> t_{n+1} \\
	\;\;\;\<default> t_{n+1} & &  \\
	\end{array}\hfill
	\begin{tabular}[b]{r}\vspace{7.0em}\tg{\text{E-DISPATCH-FAIL}}
	\end{tabular}
	\medskip
	}
\noindent\def\arraystretch{1.5}
	\ensuremath{%
	\begin{array}[b]{lcl}
	t_1 & \rightarrow & t'_1
	\\ \hline
	\<case> t_1 \|of| & & \<case> t'_1 \|of| \\ 
	\;\;\; \overline{\mathit{case}} &  \rightarrow &\;\;\; \overline{\mathit{case}} \\
	\;\;\;\<default> t_{n+1} & & \;\;\;\<default> t_{n+1} \\
	\end{array}\hfill
	\begin{tabular}[b]{r}\vspace{4.0em}\tg{\text{E-DISPATCH}}
	\end{tabular}
	\medskip
	}
\end{boxedminipage}
\caption{Reduction rules for type case terms.}
\label{fig:evaluatingTypeCase}
\end{figure*}

\section{Core language}
\label{sec:language}

\paragraph*{Syntax}
Our core language (Figure \ref{fig:syntax}) is a simply typed call-by-value $\lambda$-calculus. Terms of the language include let-statements, a fixed point operator for recursion, function application and if-then-else statements. Constructs for lists are included in the language: cons, nil, null, head and tail. Based on these, complex operations such as map, fold and filter can be built. For simplicity, we did not include these in our syntax. Specific to our language is the querying construct for selecting data in the knowledge system based on a concept expression and projections from an object to a set of objects using role expressions. Casting is done via a type-dispatch construct that contains an arbitrary number of cases plus a default case. We use an overbar notation to represent sequences of syntactical elements. Concretely, $\overline{a}$ stands for $a_1, a_2, ..., a_n$. As DL has no unique name assumption, objects can be syntactically different but semantically equivalent. Therefore, we also included the equality operator in our representation.    
Values ($v$) include objects defined in the knowledge base, nil and cons to represent lists, $\lambda$-abstractions and primitive values. $\lambda$-abstractions indicate the type of their variable. In terms of primitive values, we assume data types such as integers and strings, but omit routine details. To illustrate them, we usually just include booleans in our syntax. Types ($T$) consist of concept expressions built according to Table \ref{tbl:axiomConstruction}, type constructors for function and list types and primitive types. Additionally, we use a typing context to store type bindings for $\lambda$-abstractions.
To simplify recursion, we also define a letrec as an abbreviation of the fixpoint operator (see Figure \ref{fig:programmingAbbrev}).

\paragraph*{Semantics}
The operational semantics is defined using a reduction relation, which extends the standard approaches. Reduction of lists and terms not related to the knowledge bears no significant difference from rules as, e.g., defined in \cite{Pierce:2002:TPL:509043}. We therefore show these rules in the appendix and focus on the terms related to the knowledge base (see Figure \ref{fig:kbSemantics}). 

A term representing a query can be directly evaluated to a list of objects (E-QUERY). The query reduction rule queries the knowledge system for all $\mathit{?X}$ for which the axiom $\mathcal{K}\models \mathit{?X}:C$ is true. As $\lambda_{DL}$ relies on DL-safe queries, only objects actually defined in the signature are allowed. For simplicities sake, we consider the result to be a list and introduce a $\sigma$-operator that takes care of communication between the knowledge system and $\lambda_{DL}$. Projections (E-PROJ and E-PROJV) behave similarly. Once the term has been reduced to a object $a$, the knowledge system is queried for all $\mathit{?X}$ for which $\mathcal{K}\models (a,\mathit{?X}):R$. Again, anonymous objects are not considered and the result is converted into a list by the $\sigma$-operator. In case of equivalence, both terms must first be reduced to values (E-EQ1 and E-EQ2). Once both terms are values, equivalence can be computed. Equivalence is distinguished into equivalence for objects (EQ-NOMINAL-TRUE and EQ-NOMINAL-FALSE) and equivalence for primitive values (EQ-PRIM-TRUE and EQ-PRIM-FALSE). $\lambda_{DL}$ considers two primitive values only equivalent if they are syntactically equal. In case of objects, the knowledge base is queried. If the knowledge system can unambiguously prove that $a$ is equivalent to $b$, the two objects are considered to be equal. Due to the open-world querying, objects are considered to be different if the knowledge system is unsure or if it can actually prove that the two objects are not equivalent. We do not consider equivalence for lists or $\lambda$-abstractions and avoid these cases during type-checking.

Evaluation of type-dispatch terms (see Fig. \ref{fig:evaluatingTypeCase}) is somewhat special. The terms to be dispatched is first reduced to a object (E-DISPATCH). The semantics can then test the object case by case until one of them matches (E-DISPATCH-SUCC and E-DISPATCH-FAIL). For each case the knowledge system is queried whether the axiom $\mathcal{K}\models a:C$ is true. Due to the open-world querying, it might happen that the knowledge system cannot compute such a membership. In this case, the type-dispatch uses its default case to continue evaluation.



\section{Type system}
\label{sec:typesystem}

The most distinguishing feature of the type system for $\lambda_{DL}$ is the addition of concept expressions, built according to the rules described in Table \ref{tbl:conceptExpressions}, as types in the language. For constructs unrelated to the knowledge system, this has little impact. However, computation of upper and lower bounds change due to the addition of concepts. 

\paragraph*{Least-Upper Bound and Greatest-Lower Bound}
Computation of the least-upper bound of two types $S$ and $T$, as, e.g., required for typing if-then-else terms is done by a special judgment dubbed $lub$ (see Fig. \ref{fig:lup}). In case of a least-upper bound for primitive types, we simply assume the types to be equal, making the least-upper bound the type itself (LUB-PRIMITIVE). For two concepts $C$ and $D$, a new concept $C \sqcup D$ is constructed (LUB-CONCEPT). For lists of the form $S\;\text{list}$ and $T\;\text{list}$, we compute the least-upper bound of $S$ and $T$ as a new type for the list. For two functions, $S_1 \rightarrow S_2$ and $T_1 \rightarrow T_2$, the greatest-lower bound of the types $S_1$ and $T_1$ as well as the least upper bound of $S_2$ and $T_2$ are computed. 

\begin{figure}[h!]
\centering
\begin{boxedminipage}{0.45\textwidth}
	\ax{\text{LUB-PRIMITIVE}}
		{lub(\pi_1,\pi_1) \Rightarrow \pi_1}

	\ax{\text{LUB-CONCEPT}}
		{lub(C,D)\Rightarrow C \sqcup D}

	\ir{\text{LUB-LIST}}
		{lub(S,T) \Rightarrow W}
		{lub(S\;\text{list},T\;\text{list}) \Rightarrow W\;\text{list}}

	\ir{\text{LUB-FUNC}}
		{glb(S_1,T_1)\Rightarrow W_1 \dnp lub(S_2, T_2) \Rightarrow W_2}
		{lub(S_1 \rightarrow S_2, T_1 \rightarrow T_2) \Rightarrow W_1 \rightarrow W_2}
\end{boxedminipage}
\caption{Least-upper bound of types.}
\label{fig:lup}
\end{figure}

The greatest-lower bound of two types $S$ and $T$ works analogous to the least-upper bound. Two primitive types must be again equal, making their greatest lower bound the type again. The greatest lower bound of two concepts $C$ and $D$ is the concept $C \sqcap D$. Lists are again reduced to a greatest lower bound of their type. The same is true for functions. The exact rules can be seen in the Fig. \ref{fig:glb} in the appendix.

\paragraph*{Typing knowledge-base unrelated constructs}

Given the judgment for the least upper bound of two types, the typing rules can now be defined (see Fig. \ref{fig:unrelatedTyping}). Typing of let, fixpoint operations, applications, abstractions, variables and primitive values does not differ from standard approaches. Typing of if-then-else statements relies on the $lub$-judgment to create a type $W$ that combines both branches.  

\begin{figure}[h!]
\centering
\begin{boxedminipage}{0.45\textwidth}
\ir{\text{T-LET}}
	{\Gamma \vdash t_1 : T_1 \dnp \Gamma,x:T_1 \vdash t_2 : T_2}
	{\Gamma \vdash \<let> x=t_1 \|in| t_2 : T_2}

\ir{\text{T-FIX}}
	{\Gamma\vdash t_1 : T_1 \rightarrow T_1}
	{\Gamma\vdash \<fix> t_1 : T_1}

\ir{\text{T-APP}}
	{\Gamma \vdash t_1 : T_1 \rightarrow T_2 \dnp \Gamma \vdash t_2 : T_1}
	{\Gamma \vdash t_1 t_2 : T_2}

\ir{\text{T-IF}}
	{\Gamma \vdash t_1 : \text{bool} \dnp \Gamma \vdash t_2 : S \dnp \Gamma \vdash t_3 : T \\ lub(S,T) \Rightarrow W}
	{\Gamma \vdash \<if> t_1 \|then| t_2 \|else| t_3 : W}

\ir{\text{T-ABS}}
	{\Gamma,x:T_1\vdash t_2 : T_2}
	{\Gamma \vdash \lambda (x:T_1).t_2 : T_1 \rightarrow T_2}

\ir{\text{T-VAR}}
	{x : T \in \Gamma}
	{\Gamma \vdash x : T}

\ax{\text{T-TRUE}}
	{\Gamma \vdash \text{true} : \text{bool}}

\ax{\text{T-FALSE}}
	{\Gamma \vdash \text{false} : \text{bool}}

\ir{\text{T-SUB}}
	{\Gamma \vdash t : S \dnp S <: T}
	{\Gamma \vdash t : T}
\end{boxedminipage}
\caption{Typing rules for constructs unrelated to the KB.}
\label{fig:unrelatedTyping}
\end{figure}

In terms of lists, we restrict ourselves to lists of objects for demonstration purposes. An empty list (T-NIL) can be typed using the type annotation. A cons function (T-CONS) can be typed if it is applied to a term of type $T_1$ and a term of type $T_2\;\text{list}$. The new list can be typed using the least-upper-bound judgment to create the type $T_3\;\text{list}$. The remainder are standard list typing rules: A null function takes a well-typed list and returns a boolean value. Head needs a well-typed list of type $T\;\text{list}$ and returns a value of type $T$. Tail again takes a well-typed list of type $T\;\text{list}$ and returns a list of the same type. Fig. \ref{fig:listTyping} summarizes the rules.

\begin{figure}[bt!]
\begin{boxedminipage}{0.45\textwidth}
\ax{\text{T-NIL}}
	{\Gamma \vdash \<nil[T]> : T\;\text{list}}
	
\ir{\text{T-CONS}}
	{\Gamma \vdash t_1 : T_1 \dnp \Gamma \vdash t_2 : T_2\;\text{list} \\ lub(T_1,T_2)\rightarrow T_3}
	{\Gamma \vdash \<cons> t_1 \; t_2 : T_3\;\text{list}}

\ir{\text{T-NULL}}
	{\Gamma \vdash t_1 : T\;\text{list}}
	{\Gamma \vdash \<null>t_1 : \text{Bool}}

\ir{\text{T-HEAD}}
	{\Gamma \vdash t_1 : T\;\text{list}}
	{\Gamma \vdash \<head>t_1 : T}

\ir{\text{T-TAIL}}
	{\Gamma \vdash t_1 : T\;\text{list}}
	{\Gamma \vdash \<tail>t_1 : T\;\text{list}}
\end{boxedminipage}
\caption{Typing rules for lists}
\label{fig:listTyping}
\end{figure}

\paragraph*{Typing of knowledge-base related constructs}

\begin{figure}[bt!]
\begin{boxedminipage}{0.45\textwidth}
\ir{\text{T-QUERY}}
	{\mathcal{K}\not\models C\equiv\bot}
	{\Gamma\vdash \<query> C : C\;\text{list}}
	
\ir{\text{T-PROJ}}
	{\Gamma \vdash t_1 : C}
	{\Gamma \vdash t_1.R : (\exists R^-.C)\;\text{list}}

\ir{\text{T-EQN}}
	{\Gamma \vdash t_1 : C \dnp \Gamma \vdash t_2 : D\dnp \mathcal{K}\not\models C\sqcap D \equiv\bot}
	{\Gamma\vdash t_1 \|=| t_2 : \text{bool}}

\ir{\text{T-EQP}}
	{\Gamma \vdash t_1 : \Pi_1 \dnp \Gamma \vdash t_2 : \Pi_1}
	{\Gamma\vdash t_1 \|=| t_2 : \text{bool}}

\ax{\text{T-OBJECT}}
	{\Gamma\vdash a:\{\;a\;\}}
\end{boxedminipage}
\caption{Typing rules for constructs related to the KB.}
\label{fig:kbTyping}
\end{figure}

\begin{figure}[b!]
\begin{boxedminipage}{0.45\textwidth}
\ax{S-RELF}
	{S <: S}

\ir{S-CONCEPT}
	{\mathcal{K} \models C \sqsubseteq D}
	{C <: D}

\ir{S-LIST}
	{S <: T}
	{S\;\text{list} <: T\;\text{list}}

\ir{S-FUNC}
	{T_1 <: S_1 \dnp S_2 <: T_2}
	{S_1 \rightarrow S_2 <: T_1 \rightarrow T_2}
\end{boxedminipage}
\caption{Subtyping rules.}
\label{fig:subtyping}
\end{figure}

\begin{figure*}[tb!]
\centering
\begin{boxedminipage}{0.7\textwidth}
\noindent\def\arraystretch{1.5}
	\ensuremath{%
	\begin{array}[b]{lcr}
	\Gamma\vdash t_0 : D & & \Gamma,x_i:C_i\vdash t_i : T_i\;\text{for i=}1,..n	 \\
	\mathcal{K}\not\models C_i \sqsubseteq C_j\;\text{for}\;i<j & & \mathcal{K}\not\models C_i \sqcap D \equiv \bot\;\text{for i}=1,..,n \\
	\Gamma\vdash t_{n+1} : T_{n+1} & & \overline{\mathit{lub}}(T_1,...,T_{n+1})\Rightarrow W
	\\ \hline
	\Gamma\vdash \<case> t_0 \|of| & & \\ 
	\;\;\;\<type> C_1 \|as| x_1 \|->| t_1 & & \\
	\;\;\;... & : & W\hspace{2cm}\\
	\;\;\;\<type> C_n \|as| x_n \|->| t_n & &  \\
	\;\;\;\<default> t_{n+1} & &  \\
	\end{array}\hfill
	\begin{tabular}[b]{r}\vspace{7.0em}\tg{\text{T-DISPATCH}}
	\end{tabular}
	\medskip
	}
\end{boxedminipage}
\caption{Typing rule for type case}
\label{fig:typingTypecase}
\end{figure*}

Typing of terms related to the knowledge base is summarized in Figure \ref{fig:kbTyping}. Queries (T-QUERY) have a concept associated with them - therefore, the result of the evaluation will be of type $C\;\text{list}$. To avoid unsatisfiable queries, the knowledge system is queried whether the concept $C$ is satisfiable. If it is, typing does not assign a type to the term and type-checking aborts with an error. Projections (T-PROJ) require a term of type $C$ and can then be typed by the inverse of the relation used for the projection. While this may seem confusing on first sight, it is actually the most precise type that can be assigned to this term. Range-definitions of roles are often extremely general (e.g., the range definition for $\text{\tt influencedBy}$). Equivalence (T-EQN and T-EQ-P) simply requires two well typed values that are either primitives or objects and can then be typed as $\text{bool}$ . Lastly, single objects can be typed using a nominal concept --- a concept expression created through enumerating its members. 

Typing of a type-dispatch (see Fig. \ref{fig:typingTypecase}) is similar to typing of a if-then-else. Given that the term being dispatched is a well typed concept $D$, the type of the term is the least-upper-bound of all branches. We use $\overline{lub}$ as a shortcut for the repeated application of the $lub$-judgment. Additional checks ensure meaningful cases. First of all, the intersection between $C_i$ and $D$ should not be equivalent to $\bot$, as it would then be impossible for the case to ever match. Second, since cases are checked sequentially, it should not happen that a case is subsumed by a case occurring before it. 

\paragraph*{Subtyping}
Subtyping rules are summarized in Fig. \ref{fig:subtyping}. Any type is always a subtype of itself (S-RELF). Subtyping for concepts is handled by the knowledge system. A concept $C$ is a subtype of concept $D$ if the knowledge base can infer that $\mathcal{K}\models C \sqsubseteq D$ (S-CONCEPT). The forwarding of this decision to the knowledge system is important because the knowledge system can take inferred facts into account before making the conclusion. Subtyping for list and function types is reduced to subtyping checks for their associated types. A list $S\;\text{list}$ is a subtype of $T\;\text{list}$ if $S <: T$ is true (S-LIST). A function is subsumed by another if its domain is more specific, but its co-domain more general (S-FUNC).

\paragraph*{Algorithmic type-checking}

Algorithmic type-checking is completely syntax driven. For instance,
transitivity, which could fail to be syntax-driven, is handled by the
knowledge system in case of concept expressions, while primitive types
do not include any subtype relations. 


\section{Type soundness}
\label{sec:soundness}

In this section, we prove the soundness of $\lambda_{DL}$: If a program is well-typed, it does not get stuck. As with many other languages, there are exceptions to this rule (e.g., down-casting in object-oriented languages, cf.~\cite{DBLP:journals/toplas/IgarashiPW01}). For $\lambda_{DL}$, these exceptions concern lists. We therefore show that if a program is well-typed, then the only way it can get stuck is if it reaches a point where it tries to compute $\textbf{head}\;\textbf{nil}$ or $\textbf{tail}\;\textbf{nil}$. 
We proceed in two steps, by showing that a well-typed term is either a value or it can take a step (progress) and by showing that if that term takes a step, the result is also well-typed (preservation). We start by providing some forms about the possible well-typed values (canonical forms) for each type.

\begin{lemma}[Canonical Forms Lemma]\label{lemma:canonical}
Let $v$ be a well-typed value. Then the following observations can be made:
\begin{enumerate}
\item If $v$ is a value of type $C$, then $v$ is of the form $a$.
\item If $v$ is a value of type $T_1 \rightarrow T_2$, then $v$ is of the form $\lambda(x:S_1).t_2$ with $S_1 <: T_1$.
\item If $v$ is a value of type $C\;\text{list}$, then $v$ is either of the form  $(\textbf{cons}\;v_1 ...)$ or $\textbf{nil}$.
\item If $v$ is a value of type $\text{bool}$, then either $v$ is either $\text{true}$ or $\text{false}$.
\end{enumerate}
\end{lemma}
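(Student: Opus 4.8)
The plan is to carry out a standard canonical-forms argument, reducing each of the four clauses to an inversion of the typing relation, where the only real work is taming the subsumption rule T-SUB. The key preliminary observation is that the subtyping relation of Figure~\ref{fig:subtyping} never crosses between the four syntactic shapes of types: if the judgment $S <: T$ holds, then $S$ and $T$ are ``of the same kind'' --- $S$ a concept forces $T$ a concept, $S = S_1 \to S_2$ forces $T = T_1 \to T_2$, $S = S'\;\text{list}$ forces $T = T'\;\text{list}$, and $S$ primitive forces $T = S$. This is proved by a short induction on the derivation of $S <: T$: S-RELF is immediate, and S-CONCEPT, S-LIST, and S-FUNC each relate types of a single shape. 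Here it matters that there is no free-standing transitivity rule for $<:$ --- transitivity over concepts is discharged inside the knowledge system through $\mathcal{K} \models C \sqsubseteq D$, so it never appears as a case. As a by-product, inverting S-FUNC also records the component constraints $T_1 <: S_1$ and $S_2 <: T_2$, which clause~2 will use.

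Next I would establish the corresponding \emph{generation} statements for values: by induction on the derivation of $\Gamma \vdash v : T$, appealing to the shape lemma at every T-SUB step, an object $a$ can only be assigned a concept type, $\textbf{nil}$ and $\textbf{cons}\,v_1\,v_2$ only a list type, an abstraction $\lambda(x{:}S_1).t_2$ only a function type, and a primitive value $p$ only $\text{bool}$. Each value form has exactly one applicable introduction rule --- T-OBJECT, T-NIL, T-CONS, T-ABS, and T-TRUE/T-FALSE respectively --- and otherwise the derivation ends in T-SUB, where the shape lemma applies to its premise $S <: T$. The four clauses then fall out by pure case analysis on the grammar of values: if $v : C$ every value form except $a$ is excluded, giving clause~1; if $v : C\;\text{list}$ only $\textbf{nil}$ and $\textbf{cons}$ survive, giving clause~3; and if $v : \text{bool}$ only primitive values survive, so $p ::= \text{true} \mid \text{false}$ finishes clause~4.

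Clause~2 needs slightly more than a shape check, since it also claims a subtyping relationship between the declared domain $S_1$ and the expected domain $T_1$. I would obtain this by strengthening the generation statement for abstractions: if $\Gamma \vdash \lambda(x{:}S_1).t_2 : T_1 \to T_2$ then $T_1 <: S_1$ (and the body is typeable at the expected codomain), proved by induction on the typing derivation --- the T-ABS base case gives $T_1 = S_1$ and hence the relation by S-RELF, while a T-SUB step with $\lambda(x{:}S_1).t_2 : U_1 \to U_2$ and $U_1 \to U_2 <: T_1 \to T_2$ yields $U_1 <: S_1$ from the induction hypothesis and $T_1 <: U_1$ from inverting S-FUNC, and these compose (using transitivity of concept subsumption inside $\mathcal{K}$, reflexivity otherwise) to $T_1 <: S_1$. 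I expect this threading of the domain constraint through nested T-SUB/S-FUNC layers, together with keeping the contravariant direction straight, to be the only delicate point; everything else is routine inspection of the typing and subtyping rules.
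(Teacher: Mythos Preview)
Your argument is far more detailed than the paper's, which simply records ``Immediate from the typing relation'' and moves on. The machinery you set up --- the shape-preservation lemma for $<:$, the generation statements for each value form, and the explicit handling of stacked T-SUB applications --- is exactly what a full proof requires, and it is essentially correct.

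One point you should have flagged, however: in clause~2 your inversion actually yields $T_1 <: S_1$, not the $S_1 <: T_1$ that the lemma claims. Your direction is the right one --- from T-ABS the abstraction has type $S_1 \to S_2$, and any subsequent T-SUB via S-FUNC forces the new domain $T_1$ to satisfy $T_1 <: S_1$ (domains are contravariant). The direction written in the lemma statement is in general not derivable. Since the paper never actually uses this side condition (the progress cases for T-APP and T-FIX need only the syntactic shape of the value), the slip is harmless downstream, but your write-up silently proves a different statement than the one you set out to establish; you should note the discrepancy rather than let it pass.

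A smaller quibble: ``reflexivity otherwise'' does not quite cover composing two S-FUNC or two S-LIST steps when threading the domain constraint through nested T-SUB applications. You need transitivity of $<:$, which is not a primitive rule in Figure~\ref{fig:subtyping} and must be established by a separate structural induction on the type. This is routine, but it is not reflexivity.
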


\begin{proof}
Immediate from the typing relation. 
\end{proof}

Given Lemma~\ref{lemma:canonical}, we can show show that a well-typed term is either a value or it can take a step.

\begin{theorem}[Progress]\label{theorem:progress}
Let $t$ be a well-typed closed term. If $t$ is not a value, then there exists a term $t'$ such that $t \rightarrow t'$.
If $\Gamma \vdash t : T$, then $t$ is either a value, a term containing the forms $\textbf{head}\;\textbf{nil}$ and $\textbf{tail}\;\textbf{nil}$, or there is some $t'$ with $t \rightarrow t'$.
\end{theorem}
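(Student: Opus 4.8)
I would prove this by induction on the derivation of $\Gamma \vdash t : T$, where $\Gamma = \emptyset$ since $t$ is closed. The only non-syntax-directed rule is T-SUB, and it leaves the term unchanged, so in that case the induction hypothesis applied to the premise $\Gamma \vdash t : S$ gives the result immediately. For the value-introduction rules --- T-ABS, T-TRUE, T-FALSE, T-NIL, T-OBJECT --- the term $t$ is already a value and there is nothing to do. The rule T-VAR cannot occur, because $t$ is closed. In every case where a typing rule has subterms $t_1,\dots,t_k$ appearing in its premises, I recurse only into those $t_i$ whose typing judgment is under the empty context, so the closedness hypothesis is maintained exactly where it is needed.

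For each remaining rule, $t$ has the shape of an elimination form (or a constructor applied to not-yet-evaluated arguments). I apply the induction hypothesis to each immediate subterm $t_i$ in turn. If some $t_i$ contains $\textbf{head}\;\textbf{nil}$ or $\textbf{tail}\;\textbf{nil}$, then so does $t$, and we are done; this alternative simply bubbles up. Otherwise, if some $t_i$ is not a value, the induction hypothesis yields $t_i \rightarrow t_i'$, and the corresponding congruence rule --- E-PROJ, E-EQ1, E-EQ2, E-DISPATCH, and the analogous rules for let, application, if-then-else, cons, null, head, tail given in the appendix --- lets $t$ take a step. The interesting subcase is when all $t_i$ are values: here I invoke the Canonical Forms Lemma (Lemma~\ref{lemma:canonical}) to read off the shape of each value from its type and then exhibit the applicable computation rule. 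For instance, application $\beta$-reduces once $t_1$ is a $\lambda$-abstraction (the slack $S_1 <: T_1$ from clause~2 of Lemma~\ref{lemma:canonical} is harmless, since reduction ignores the annotation); $\textbf{if}$ reduces because a value of type $\text{bool}$ is $\text{true}$ or $\text{false}$; $\textbf{fix}$ and $\textbf{let}$ reduce unconditionally once their argument is a value; $t_1 = t_2$ reduces by EQ-NOMINAL-TRUE/EQ-NOMINAL-FALSE or EQ-PRIM-TRUE/EQ-PRIM-FALSE, since T-EQN and T-EQP guarantee that the operands are objects or primitives (never lists or abstractions) and both the knowledge system's answer and syntactic equality are total; $\textbf{query}\;C$ and $a.R$ reduce by E-QUERY and E-PROJV, reading the $\sigma$-operator as always delivering a well-formed list value; and a $\textbf{case}$ on an object $a$ reduces by one of E-DISPATCH-DEF, E-DISPATCH-SUCC, E-DISPATCH-FAIL, according to whether there is a default only, or the knowledge system does or does not entail $a:C_1$ for the first case.

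The two cases producing the stated exception are T-HEAD and T-TAIL: if $t_1$ is a value of type $C\;\text{list}$, Lemma~\ref{lemma:canonical} says it is either $(\textbf{cons}\;v_1\;\dots)$ --- in which case $\textbf{head}\;t_1$ and $\textbf{tail}\;t_1$ step by the appendix rules --- or it is $\textbf{nil}$, in which case $t$ is literally $\textbf{head}\;\textbf{nil}$ or $\textbf{tail}\;\textbf{nil}$. I expect the main obstacle to be bookkeeping rather than mathematical depth: one must thread the ``contains $\textbf{head}\;\textbf{nil}$/$\textbf{tail}\;\textbf{nil}$'' alternative uniformly through every congruence case, and one must check the knowledge-base-driven cases with some care --- verifying that the open-world reading makes the premises of E-DISPATCH-SUCC and E-DISPATCH-FAIL (respectively EQ-NOMINAL-TRUE and EQ-NOMINAL-FALSE) jointly exhaustive, and that a DL-safe $\textbf{query}$ or role projection returning no objects still reduces (to the empty list) rather than getting stuck.
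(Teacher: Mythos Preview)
Your proposal is correct and follows essentially the same approach as the paper: induction on the typing derivation, with T-SUB and the value-introduction rules handled trivially, T-VAR excluded by closedness, and each elimination form handled by applying the induction hypothesis to the principal subterm(s), using the congruence rules when a step is available and the Canonical Forms Lemma when all subterms are values. If anything, you are slightly more explicit than the paper about threading the $\textbf{head}\;\textbf{nil}$/$\textbf{tail}\;\textbf{nil}$ alternative through the congruence cases and about why the knowledge-base-driven premises (E-DISPATCH-SUCC vs.\ E-DISPATCH-FAIL, EQ-NOMINAL-TRUE vs.\ EQ-NOMINAL-FALSE) are jointly exhaustive, but these are exactly the routine checks the paper leaves implicit.
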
 

\begin{proof}
By induction on the derivation of $\Gamma \vdash t : T$. We proceed by examining each case individually. 
\begin{description}

\item[(\underline{T-LET})] $t = \textbf{let}\;x\;\text{=}\;t_1\;\textbf{in}\;t_2,\;\Gamma \vdash t_1:T_1, \;\Gamma,x:T_1\vdash t_2 : T_2$. By hypothesis, $t_1$ is either a value or it can make a step. If it can, rule E-LET applies. If its a value, E-LETV applies (see Fig. \ref{fig:normalSemantics}). 

\item[(\underline{T-FIX})] $t = \textbf{fix}\;t_1,\;\Gamma\vdash t_1:T_1\rightarrow T_1,\;\Gamma\vdash t:T_1$. By induction hypothesis, $t_1$ is either a value or it can take a step. If it can take a step, rule E-FIX applies. If its a value, by the canonical forms lemma (Lemma~\ref{lemma:canonical}), $t_1 = \lambda(x:T_1).t_2$. Therefore, rule E-FIXV applies. 

\item[(\underline{T-APP})] $t=t_1\;t_2,\; \Gamma \vdash t_1:T_{11} \rightarrow T_{12}\;, \Gamma \vdash t_2:T_{11},\;\Gamma\vdash t:T_{12}$. By hypothesis, $t_1$ and $t_2$ are either a values or they can take a step. If they can take a step, rules E-APP1 or E-APP2 apply. If both are values, then by the canonical forms lemma (Lemma~\ref{lemma:canonical}), $t_1 = \lambda(x:T_{11}).t_{11}$ and rule E-APPABS applies. 

\item[(\underline{T-IF})] $t=\textbf{if}\;t_1\;\textbf{then}\;t_2\;\textbf{else}\;t_3,\;\Gamma\vdash t_1:\text{bool},\;\Gamma\vdash t_2:S,\;\Gamma\vdash t_3:T,\;\Gamma\vdash t:W$. By induction hypothesis, $t_1$ is a value or it can take a step. If it can take a step, rule E-IF applies. If its a value, then by Lemma~\ref{lemma:canonical}, $t_1 = \text{true}$ or $t_1 = \text{false}$. In this case, either rules E-IF-TRUE or E-IF-FALSE apply.

\item[(\underline{T-ABS})] Immediate since $\lambda(x:T_1).t_2$ is a value.  

\item[(\underline{T-VAR})] Impossible since we're only looking at closed terms. 

\item[(\underline{T-TRUE})] Immediate, since $\text{true}$ is a value. 

\item[(\underline{T-FALSE})] Immediate, since $\text{false}$ is a value. 

\item[(\underline{T-SUB})] Result follows from induction hypothesis. 

\item[(\underline{T-NIL})] Immediate, since $\text{nil}$ is a value.

\item[(\underline{T-CONS})] $t=\textbf{cons}\;t_1\;t_2,\;\Gamma\vdash t_1:C,\;,\Gamma\vdash t_2:D\;\text{list}$. By hypothesis, $t_1$ and $t_2$ are either values or they can take a step. If they can take a step, rules E-CONS1 and E-CONS2 apply (see Fig. \ref{fig:listSemantics}). Otherwise, the term is a value. 

\item[(\underline{T-NULL})] $t=\textbf{null}\;t_1, \;\Gamma \vdash t_1 : T\;\text{list},\;\Gamma\vdash t:\text{bool}$. By hypothesis, $t_1$ is either a value or it can take a step. If it can take a step, rule E-NULL applies. If its a value, by Lemma~\ref{lemma:canonical}, $t_1 = \textbf{nil}$ or $t_1 = (\textbf{cons}\;v_1 ...)$. Then either E-NULL-TRUE or E-NULL-FALSE apply. 

\item[(\underline{T-HEAD})] $t=\textbf{head}\;t_1,\;\Gamma\vdash t_1 : T\;\text{list},\;\Gamma\vdash t:T$. By hypothesis, $t_1$ is either a value or it can take a step. If it can take a step, rule E-HEAD applies. Otherwise, by Lemma~\ref{lemma:canonical}, $t_1 = \textbf{nil}$ or $t_1 = (\textbf{cons}\;v_1 ...)$. Then either rule E-HEADV applies or the term is in the accepted normal form $t=\textbf{head}\;\textbf{nil}$.

\item[(\underline{T-TAIL})] $t=\textbf{tail}\;t_1,\;\Gamma\vdash t_1 : T\;\text{list},\;\Gamma\vdash t:T\;\text{list}$. By hypothesis, $t_1$ is either a value or it can take a step. If it can take a step, rule E-TAIL applies. Otherwise, by Lemma~\ref{lemma:canonical}, $t_1 = \textbf{nil}$ or $t_1 = (\textbf{cons}\;v_1 ...)$. Then either rule E-TAILV applies or the term is in the accepted normal form $t=\textbf{tail}\;\textbf{nil}$.

\item[(\underline{T-QUERY})] $t=\textbf{query}\;C,\;\Gamma\vdash t:C\;\text{list}$. Immediate since rule E-QUERY applies (see Fig. \ref{fig:kbSemantics}).

\item[(\underline{T-PROJ})] $t = t_1.R,\;\Gamma\vdash t_1 : C,\;\Gamma\vdash t:(\exists R^-.C)$. By hypothesis, either $t_1$ is a value or it can take a step. If it can take a step, rule E-PROJ applies. If its a value, then by Lemma~\ref{lemma:canonical} $t_1 = a$, therefore rule E-PROJV applies.

\item[(\underline{T-DISPATCH})] $\;$
\begin{lstlisting}[aboveskip=0pt,belowskip=0pt]
$t=\textbf{case}\;t_0\;\textbf{of}$
	$\overline{\mathit{case}}$
	$\textbf{default}\;t_{n+1}$
$\Gamma\vdash t_0:D,\;\Gamma\vdash t:W$
\end{lstlisting}
By hypothesis, $t_0$ is either a value or it can take a step. If it can take a step, rule E-DISPATCH applies. If its a value, by Lemma~\ref{lemma:canonical}, $t_0 = a$. If $\overline{\mathit{case}}$ is non-empty, either rules E-DISPATCH-SUCC or E-DISPATCH-FAIL apply. Otherwise, rule E-DISPATCH-DEF applies (see Fig. \ref{fig:evaluatingTypeCase}). 

\item[(\underline{T-EQN})] $t_1\;\textbf{=}\;t_2,\;\Gamma\vdash t_1:C,\;\Gamma\vdash t_2:D$. Either $t_1$ and $t_2$ are values or they can take a step. If they can take a step, rules E-EQ1 and E-EQ2 apply. If both are values, by Lemma~\ref{lemma:canonical}, $t_1=a$, $t_2=b$. Therefore, either rule EQ-NOMINAL-TRUE or EQ-NOMINAL-FALSE applies. 

\item[(\underline{T-EQP})] $t_1\;\textbf{=}\;t_2,\;\Gamma\vdash t_1:\Pi_1,\;\Gamma\vdash t_2:\Pi_1$. Either $t_1$ and $t_2$ are values or they can take a step. If they can take a step, rules E-EQ1 and E-EQ2 apply. If both are values, them they are either syntactically equal or not. Therefore either EQ-PRIM-TRUE or EQ-PRIM-FALSE applies. 

\item[(\underline{T-OBJ})] Immediate, since $t=a$ is a value. 
\end{description}
\end{proof}

For proving preservation, two additional Lemmas are required. One, that substitution preserves the type and two, that the least-upper bound judgment computes a type that is really a supertype of its two input types. 

\begin{lemma}[Substitution]\label{lemma:substitution}
If $\Gamma,x:S\vdash t:T$ and  $\Gamma\vdash s:S$, then $\Gamma\vdash\lbrack x\mapsto s\rbrack t : T$.
\end{lemma}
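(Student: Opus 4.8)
The plan is to prove the Substitution Lemma by structural induction on the derivation of $\Gamma,x:S\vdash t:T$, following the standard recipe for simply-typed calculi (as in Pierce's TAPL) but paying attention to the knowledge-base–related typing rules that are new to $\lambda_{DL}$. First I would fix the statement: assuming $\Gamma,x:S\vdash t:T$ and $\Gamma\vdash s:S$, show $\Gamma\vdash [x\mapsto s]t:T$. I would also assume, as usual, a Weakening Lemma (if $\Gamma\vdash t:T$ and $x\notin\mathrm{dom}(\Gamma)$ then $\Gamma,x:S\vdash t:T$) and a Permutation Lemma for contexts; these are routine and I would either cite them or prove them in passing, since they are needed in the T-ABS and T-LET cases when we push the substitution under a binder.

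The bulk of the argument is a case analysis on the last rule used. For the variable case T-VAR with $t=y$: if $y=x$ then $T=S$ and $[x\mapsto s]x=s$, so we need $\Gamma\vdash s:S$, which is the hypothesis; if $y\neq x$ then $[x\mapsto s]y=y$ and $y:T\in\Gamma$, so T-VAR applies directly. For the binder cases T-ABS and T-LET, I would use the standard convention that bound variables are renamed to avoid capture (so the bound variable differs from $x$ and does not occur free in $s$), apply the induction hypothesis to the subderivation in the extended context $\Gamma,x:S,z:T_1$ (after permuting to $\Gamma,z:T_1,x:S$ and weakening $\Gamma\vdash s:S$ to $\Gamma,z:T_1\vdash s:S$), and then reassemble with the same rule. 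For all the purely structural/congruence rules — T-APP, T-IF, T-FIX, T-CONS, T-NULL, T-HEAD, T-TAIL, T-SUB, and the equality rules T-EQN, T-EQP — substitution commutes with the term constructors, the induction hypothesis applies to each premise, and the side conditions that mention the knowledge base ($\mathcal{K}\not\models C\sqcap D\equiv\bot$ in T-EQN, the $lub$ judgments, the subtyping facts in T-SUB) are unaffected because they concern only types, which the induction hypothesis shows are preserved. The genuinely $\lambda_{DL}$-specific rules are similarly easy: T-OBJECT and T-QUERY and T-NIL type closed terms ($a$, $\mathbf{query}\,C$, $\mathbf{nil}[T]$) on which substitution is the identity; T-PROJ is a congruence in its single subterm; and T-DISPATCH is handled like T-IF, applying the induction hypothesis to $t_0$ and to each branch $t_i$ (in the context extended by $x_i:C_i$, again using permutation/weakening), while the side conditions $\mathcal{K}\not\models C_i\sqsubseteq C_j$ and $\mathcal{K}\not\models C_i\sqcap D\equiv\bot$ and the $\overline{lub}$ computation survive untouched since the types of the branches are preserved.

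I do not expect a deep obstacle here — the lemma is standard — but the one point requiring care is the interaction of T-SUB with the binder cases, i.e. making sure the induction is set up so that a subtyping step interleaved with an abstraction does not break the argument; the usual fix is simply that T-SUB is just another (unary) rule in the induction and the induction hypothesis hands back a term of exactly the same type $T$, so no inversion on the shape of derivations is needed. The only mild bookkeeping nuisance is the repeated appeal to weakening/permutation of the context when substituting under the binders introduced by T-ABS, T-LET, and the per-branch binders of T-DISPATCH; I would state those auxiliary lemmas once up front and then invoke them silently. Everything else is a mechanical check that $[x\mapsto s]$ distributes over the term formers and leaves type-level side conditions alone.
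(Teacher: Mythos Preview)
Your proposal is correct and follows exactly the standard approach the paper intends: the paper actually omits the proof entirely, remarking that substitution in $\lambda_{DL}$ does not differ from the standard treatment in Pierce's \emph{Types and Programming Languages} and deferring to that reference. Your detailed case analysis is precisely the fleshed-out version of that omitted argument, and your observations about the knowledge-base side conditions being type-level (hence unaffected by term substitution) correctly identify why the $\lambda_{DL}$-specific rules pose no new difficulty.
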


\begin{proof}
Substitution in $\lambda_{DL}$ does not differ from standard approaches, e.g., as described in \cite{Pierce:2002:TPL:509043}. Therefore, the proof is omitted.
\end{proof}

\begin{lemma}[Least-Upper-Bound]\label{lemma:lub}
Let $S$, $T$ and $W$  be types. If $lub(S,T) \Rightarrow W$, then $S <: W$ and $T <: W$. 
\end{lemma}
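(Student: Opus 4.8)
The plan is to prove the Least-Upper-Bound lemma by structural induction on the derivation of $lub(S,T) \Rightarrow W$, treating each of the four rules (LUB-PRIMITIVE, LUB-CONCEPT, LUB-LIST, LUB-FUNC) as a separate case. For each case I need to establish both $S <: W$ and $T <: W$ using the subtyping rules in Fig.~\ref{fig:subtyping}. Since LUB-FUNC mentions the greatest-lower-bound judgment $glb$, the lemma as stated is not quite self-contained: I will need a companion claim, namely that if $glb(S,T) \Rightarrow W$ then $W <: S$ and $W <: T$, and the two facts must be proved by simultaneous (mutual) induction because LUB-FUNC invokes $glb$ and (symmetrically) the $glb$ rules for functions will invoke $lub$. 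So the first thing I would do is state this strengthened, mutual induction hypothesis explicitly.

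\textbf{Case LUB-PRIMITIVE.} Here $S = T = W = \pi_1$, and $S <: W$, $T <: W$ both follow immediately from S-RELF. The analogous $glb$ case for primitives is identical.

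\textbf{Case LUB-CONCEPT.} Here $S = C$, $T = D$, and $W = C \sqcup D$. I need $C <: C \sqcup D$ and $D <: C \sqcup D$. By S-CONCEPT it suffices to show $\mathcal{K} \models C \sqsubseteq C \sqcup D$ and $\mathcal{K} \models D \sqsubseteq C \sqcup D$. These are valid in every interpretation by the set-theoretic semantics of $\sqcup$ in Table~\ref{tbl:conceptExpressions} ($C^{\mathcal{I}} \subseteq C^{\mathcal{I}} \cup D^{\mathcal{I}}$), hence hold in all models of $\mathcal{K}$, so the inference goes through by Def.~\ref{def:inference}. Dually, for $glb$ on concepts $W = C \sqcap D$ and $\mathcal{K} \models C \sqcap D \sqsubseteq C$, $\mathcal{K} \models C \sqcap D \sqsubseteq D$ hold because $C^{\mathcal{I}} \cap D^{\mathcal{I}} \subseteq C^{\mathcal{I}}$.

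\textbf{Case LUB-LIST.} Here $S = S'\,\text{list}$, $T = T'\,\text{list}$, $W = W'\,\text{list}$, with $lub(S',T') \Rightarrow W'$ as a subderivation. By the induction hypothesis $S' <: W'$ and $T' <: W'$, and then S-LIST gives $S'\,\text{list} <: W'\,\text{list}$ and $T'\,\text{list} <: W'\,\text{list}$. The $glb$ list case is symmetric. \textbf{Case LUB-FUNC.} Here $S = S_1 \rightarrow S_2$, $T = T_1 \rightarrow T_2$, $W = W_1 \rightarrow W_2$ with subderivations $glb(S_1,T_1) \Rightarrow W_1$ and $lub(S_2,T_2) \Rightarrow W_2$. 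By the mutual induction hypothesis, $W_1 <: S_1$ and $W_1 <: T_1$ (from the $glb$ part) and $S_2 <: W_2$, $T_2 <: W_2$ (from the $lub$ part). Then S-FUNC, which is contravariant in the domain and covariant in the codomain, yields $S_1 \rightarrow S_2 <: W_1 \rightarrow W_2$ (using $W_1 <: S_1$ and $S_2 <: W_2$) and likewise $T_1 \rightarrow T_2 <: W_1 \rightarrow W_2$. The $glb$-on-functions case is dual, invoking $lub$ on the domains and $glb$ on the codomains.

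\textbf{The main obstacle} is the mutual-recursion bookkeeping at the function case: the lemma is stated only for $lub$, so I must first realize it genuinely requires the paired $glb$ statement, set up the simultaneous induction on the combined size of the two kinds of derivations, and be careful that the contravariance in S-FUNC lines up the $glb$ facts with the domain positions and the $lub$ facts with the codomain positions. Everything else is routine: the concept cases reduce to elementary facts about $\cup$ and $\cap$ under Tarskian semantics together with Def.~\ref{def:inference}, and the list case is a direct appeal to the induction hypothesis and S-LIST.
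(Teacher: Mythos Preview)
Your proof is correct and follows essentially the same case-by-case structure as the paper's own argument. In fact your treatment of the function case is more careful: the paper simply writes ``Immediate through induction hypothesis and subtyping rules for functions,'' whereas you correctly observe that LUB-FUNC invokes $glb$ on the domain types and therefore a companion statement for $glb$ (proved by mutual induction) is needed to obtain $W_1 <: S_1$ and $W_1 <: T_1$ before S-FUNC can be applied.
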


\begin{proof}
Four cases must be considered: $S$ and $T$ are either primitives, concepts, lists or functions.
\begin{description}
	\item[Primitives:] Result is immediate since $S=T=W$. By subtyping rule S-REFL, $S <: W$ and $T <: W$ holds. 
	\item[Concepts:] $S=C,\;T=D,\;W=C\sqcup D$. Since $\mathcal{K}\models C\sqsubseteq C\sqcup D$ and $\mathcal{K}\models D\sqsubseteq C\sqcup D$, $S <: W$ and $T <: W$ hold via subtyping rule S-CONCEPT.
	\item[Lists] Immediate through the induction hypothesis and subtyping rules for lists. 
	\item[Functions] Immediate through induction hypothesis and subtyping rules for functions. 
\end{description}
\end{proof} 

Given these Lemmas, we can now continue to show that if a term takes a step by the evaluation rules, its type is preserved. 

\begin{theorem}[\textbf{Preservation}]\label{theorem:preservation}
Let $t$ be a term and $T$ a type. If $\Gamma\vdash t:T$ and $t\rightarrow t'$, then $\Gamma\vdash t':T$. 
\end{theorem}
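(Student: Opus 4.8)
The plan is to prove preservation by induction on the derivation of $\Gamma \vdash t : T$, splitting on the last typing rule and, within each case, on which evaluation rule can justify the step $t \to t'$. Congruence steps (E-LET, E-APP1/E-APP2, E-IF, E-CONS1/E-CONS2, E-PROJ, E-EQ1/E-EQ2, E-DISPATCH) are discharged immediately: apply the induction hypothesis to the active subterm --- which returns the \emph{same} type it had --- and re-apply the original typing rule, whose side conditions are untouched. The real work is in the reductions that change a term's shape, for which there are two reusable ingredients: the Substitution Lemma (Lemma~\ref{lemma:substitution}), used for every $\beta$-like step --- E-LETV, E-FIXV, E-APPABS, and the successful dispatch E-DISPATCH-SUCC, which substitutes an \emph{object} $a$ for the matched variable --- and the Least-Upper-Bound Lemma (Lemma~\ref{lemma:lub}), used wherever a type was formed or destructed through $lub$: E-IF-TRUE/E-IF-FALSE against T-IF, E-HEADV/E-TAILV against T-CONS, and all dispatch rules against T-DISPATCH. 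I would first record a routine \emph{inversion} (generation) lemma strengthening Lemma~\ref{lemma:canonical} --- e.g.\ if $\Gamma \vdash \lambda(x{:}S_1).t_2 : T_1 \to T_2$ then $T_1 <: S_1$ and $\Gamma, x{:}S_1 \vdash t_2 : S_2$ with $S_2 <: T_2$, with the analogous statement for \textbf{cons} --- since this is exactly what lets E-APPABS, E-HEADV and E-TAILV go through in the presence of the subsumption rule T-SUB (whose own case is trivial: induction hypothesis, then T-SUB again). The knowledge-base--unrelated cases then follow the pattern of \cite{Pierce:2002:TPL:509043} and can be kept terse.

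The genuinely new part is the knowledge-base--related rules, where DL entailment must be connected to the language's subtyping. The key bridge is that $\mathcal{K} \models a : C$ holds iff $\mathcal{K} \models \{a\} \sqsubseteq C$, so from T-OBJECT ($\Gamma \vdash a : \{a\}$) and S-CONCEPT one obtains $\Gamma \vdash a : C$ by T-SUB. For E-QUERY the reduct is the list value $\sigma(\{\mathit{?X} \mid \mathit{?X} \in \mathcal{O} \wedge \mathcal{K} \models \mathit{?X}:C\})$, every element $a_i$ of which satisfies $\mathcal{K} \models a_i : C$; via the bridge each $a_i$ has type $C$, and then repeated T-CONS --- collapsing the joins it introduces using $C \sqcup C <: C$ together with S-LIST/T-SUB --- yields the declared type $C\;\text{list}$. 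This step needs the assumption that $\sigma$ returns a genuine list value whose empty tail is annotated with $C$. For E-PROJV the same idea works: from $\Gamma \vdash a : C$ the bridge gives $\mathcal{K} \models a : C$, so each returned $b_i$ with $\mathcal{K} \models (a,b_i):R$ satisfies $\mathcal{K} \models b_i : \exists R^-.C$ by the semantics of $\exists R^-$, and the result has type $(\exists R^-.C)\;\text{list}$ as T-PROJ requires. The equality value-steps (EQ-NOMINAL-TRUE/FALSE, EQ-PRIM-TRUE/FALSE) are trivial: they produce \text{true} or \text{false}, of type \text{bool}.

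The dispatch cases are the fiddliest. E-DISPATCH-DEF and E-DISPATCH-SUCC follow from, respectively, $T_{n+1} <: W$ (a component of the $\overline{lub}$ judgment) and Substitution combined with the $\{a\} <: C_1$ bridge, each closed off by T-SUB. The one case needing a small new fact is E-DISPATCH-FAIL, where the step deletes the first branch: I must re-run T-DISPATCH on the shorter \textbf{case} (all its premises are sub-instances of the originals, using that $\Gamma \vdash a : D$ still holds) to get type $W' = \overline{lub}(T_2,\dots,T_{n+1})$, and then argue $W' <: W = \overline{lub}(T_1,\dots,T_{n+1})$. This rests on an auxiliary monotonicity property --- $lub(\cdot,\cdot)$ is monotone in each argument and dropping an operand from an iterated $lub$ gives a subtype --- which follows from Lemma~\ref{lemma:lub} by a short induction. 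I expect this monotonicity fact, together with stating the ``$\mathcal{K} \models a : C$ iff $\{a\} <: C$'' bridge and the $\sigma$-value assumption precisely (and checking that inversion is sound under T-SUB), to be the only real obstacles; everything else is bookkeeping.
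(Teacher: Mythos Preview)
Your proposal is correct and follows essentially the same route as the paper: induction on the typing derivation, case split on the last rule, and within each case a dispatch on the applicable evaluation rule, with Lemma~\ref{lemma:substitution} handling the $\beta$-like steps and Lemma~\ref{lemma:lub} handling the branches typed via a $lub$. Your identification of the ``$\mathcal{K}\models a:C$ iff $\{a\}<:C$'' bridge for E-QUERY, E-PROJV and E-DISPATCH-SUCC, and of the $W' <: W$ obligation for E-DISPATCH-FAIL, matches the paper's argument exactly; if anything you are more explicit than the paper about the inversion lemma needed under T-SUB, about the monotonicity of iterated $lub$, and about the well-formedness assumption on the list value returned by $\sigma$, all of which the paper leaves implicit.
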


\begin{proof}
By induction on the derivation of $\Gamma\vdash t:T$. We proceed by examining each case individually. 
\begin{description}

\item[(\underline{T-LET})] $t = \textbf{let}\;x\;\text{=}\;t_1\;\textbf{in}\;t_2,\;\Gamma\vdash t:T_2,\;\Gamma \vdash t_1:T_1, \;\Gamma,x:T_1\vdash t_2 : T_2$. There are two ways $t$ can be reduced: E-LET and E-LETV. 
\begin{description}
	\item[1] $t'=\textbf{let}\;x=t'_1\;\textbf{in}\;t_2$ By induction hypothesis, $t_1 \rightarrow t'_1$ preserves the type. Therefore, by rule T-LET, $t' : T_2$.
	\item[2] $t'=\lbrack x \mapsto v_1\rbrack t_2$. By Lemma~\ref{lemma:substitution} typing is preserved, therefore $t' : T_2$.
\end{description}

\item[(\underline{T-FIX})] $t = \textbf{fix}\;t_1,\;\Gamma\vdash t_1:T_1\rightarrow T_1,\;\Gamma\vdash t:T_1$. There are two rules by which $t$ can be reduced: E-FIX and E-FIXV.
\begin{description}
	\item[(1)] $t'=\textbf{fix}\;t'_1$. By induction hypothesis, $t_1 \rightarrow t'_1$ preserves the type. Therefore, by T-FIX, $t':T_1$.
	\item[(2)] $t'=\lbrack x \mapsto \textbf{fix}\;(\lambda (x:T_1).t_2)\rbrack t_2$. By Lemma~\ref{lemma:substitution}, substitution preserves the type. Therefore, $t':T_1$.
\end{description}

\item[(\underline{T-APP})] $t=t_1\;t_2,\; \Gamma \vdash t_1:T_{11} \rightarrow T_{12}\;, \Gamma \vdash t_2:T_{11},\;\Gamma\vdash t:T_{12}$. There are three rules by which $t'$ can be computed: E-APP1, E-APP2 and E-APPABS.
\begin{description}
	\item[(1)] $t'=t'_1 t_2$. By induction hypothesis, $t_1 \rightarrow t'_1$ preserves the type. Therefore, $t' : T_{12}$.
	\item[(2)] $t'=v_1 t_2 \rightarrow v_1 t'_2$. Same as case (1). 
	\item[(3)] $t'=(\lambda x:T.t_1) v_2 \rightarrow \lbrack x \mapsto v_2 \rbrack t_2$. By Lemma~\ref{lemma:substitution}, substitution preserves typing. Therefore, $t' : T_{12}$.
\end{description}

\item[(\underline{T-IF})] $t=\textbf{if}\;t_1\;\textbf{then}\;t_2\;\textbf{else}\;t_3,\;\Gamma\vdash t_1:\text{bool},\;\Gamma\vdash t_2:S,\;\Gamma\vdash t_3:T,\;lub(S,T)\Rightarrow W,\;\Gamma\vdash t:W$. There are three rules by which $t'$ can be computed: E-IF-TRUE, E-IF-FALSE and E-IF.
\begin{description}
	\item[(1)] $t':t_2$. By rule T-IF, $lub(S,T) \Rightarrow W$ and by Lemma~\ref{lemma:lub}, $S <: W$, therefore $t':W$ by rule T-SUB.
	\item[(2)] $t':t_3$. Same as case (1).
	\item[(3)] $t':\textbf{if}\;t'_1\;\textbf{then}\;t_2\;\textbf{else}\;t_3$. By induction hypothesis, $t_1 \rightarrow t'_1$ preserves the type. Therefore, by rule T-IF, $t' : W$
\end{description}

\item[(\underline{T-ABS})] Vacuously fulfilled $\lambda(x:T_1).t_2$ is a value.  

\item[(\underline{T-VAR})] Cannot happen. 

\item[(\underline{T-TRUE})] Vacuously fulfilled since $t=\text{true}$ is a value. 

\item[(\underline{T-FALSE})] Vacuously fulfilled since $t=\text{false}$ is a value. 

\item[(\underline{T-SUB})] Result follows from induction hypothesis. 

\item[(\underline{T-NIL})] Vacuously fulfilled since $t=\textbf{nil}$ is a value.

\item[(\underline{T-CONS})] $t=\textbf{cons}\;t_1\;t_2,\;\Gamma\vdash t_1:C,\;,\Gamma\vdash t_2:D\;\text{list},\;\Gamma\vdash t:(C\sqcup D)\;\text{list}$. There are two rules by which $t'$ can be computed: E-CONS1 and E-CONS2. 
\begin{description}
	\item[(1)] $t'=\textbf{cons}\;t'_1\;t_2$. By induction hypothesis, $t_1 \rightarrow t'$ preserves the type. Therefore, by T-CONS, $t' : (C \sqcup D)\;\text{list}$.
	\item[(2)] $t'=\textbf{cons}\;v_1\;t'_2$. Same as case (1).
\end{description} 

\item[(\underline{T-NULL})] $t=\textbf{null}\;t_1, \;\Gamma \vdash t_1 : T\;\text{list},\;\Gamma\vdash t:\text{bool}$. By hypothesis, $t_1$ is either a value or it can take a step. If it can take a step, rule E-NULL applies. If its a value, by Lemma~\ref{lemma:canonical}, $t_1 = \text{nil}$ or $t_1 = (\textbf{cons}\;v_1\;...)$. If $t_1 = \textbf{nil}$, then rule E-NULL-TRUE applies. In case of  $t_1 = (\textbf{cons}\;v_1\;...)$, rule E-NULL-FALSE applies. 

\item[(\underline{T-HEAD})] $t=\textbf{head}\;t_1,\;\Gamma\vdash t_1 : T\;\text{list},\;\Gamma\vdash t:T$. There are two rules by which $t'$ can be computed: E-HEAD and E-HEADV.
\begin{description}
	\item[(1)] $t'=\textbf{head}\;t'_1$. By induction hypothesis, $t_1 \rightarrow t'_1$ preserves the type. Therefore, by T-HEAD, $t' : T$.
	\item[(2)] $t_1 = \textbf{cons}\;v_1\;v_2,\;\Gamma\vdash v_1 : T,\;t'=v_1$. Result is immediate, since $v_1 : T$.  
\end{description}

\item[(\underline{T-TAIL})] $t=\textbf{tail}\;t_1,\;\Gamma\vdash t_1 : T\;\text{list},\;\Gamma\vdash t:T\;\text{list}$. There are two rules by which $t'$ can be computed: E-TAIL and E-TAILV.
\begin{description}
	\item[(1)] $t'=\textbf{tail}\;t'_1$. By induction hypothesis, $t_1 \rightarrow t'_1$ preserves the type. Therefore, by T-TAIL, $t' : T$.
	\item[(2)] $t_1 = \textbf{cons}\;v_1\;v_2,\;\Gamma\vdash v_2 : T\;\text{list},\;t'=v_2$. Result is immediate, since $v_2 : T\;\text{list}$.  
\end{description}

\item[(\underline{T-QUERY})] $t=\textbf{query}\;C,\;\Gamma\vdash t:C\;\text{list}$. By applying rule E-QUERY, $t' = \textbf{cons}\;a_1\;...$. However, for each $a$, it is known that $\mathcal{K}\models a:C$, therefore $\{\;a\;\} <: C$ holds for each $a$ and $\{\;a_1\;\}\sqcup\;... <: C\;\text{list}$.

\item[(\underline{T-PROJ})] $t = t_1.R,\;\Gamma\vdash t_1 : C,\;\Gamma\vdash t:(\exists R^-.C)$. There are two rules by which $t'$ can be computed: E-PROJ and E-PROJV:
\begin{description}
	\item[(1)] $t'=t'_1.R$. By induction hypothesis, typing is preserved for $t_1$. Therefore, by T-PROJ, $t' : (\exists R^-.C)\;\text{list}$.
	\item[(2)] $t'=\sigma(\{\mathit{?X}\;\vert\;\mathit{?X}\in\mathcal{O}\wedge\mathcal{K} \models (a,\mathit{?X}):R\})=\textbf{cons}\;b_1\;...$. For $a$, it is known that $\mathcal{K} \models a:C$ and for each $b$ is known that $\mathcal{K} \models (a,b):R$ holds. Therefore, $\mathcal{K} \models b:(\exists R^-.C)$ must hold for each $b$. Thereby, $\{\;b_1\;\}\sqcup\;...<:(\exists R^-.C)$ and by S-LIST $(\{\;b_1\;\}\sqcup\;...)\;\text{list}<:(\exists R^-.C)\;\text{list}$
\end{description}
\item[(\underline{T-DISPATCH})] $\;$
\begin{lstlisting}[aboveskip=0pt,belowskip=0pt]
$t=\textbf{case}\;t_0\;\textbf{of}$
	$\textbf{type}\;C_1\;\textbf{as}\;x_1\;\textbf{->}\;t_1$
	$...$
	$\textbf{type}\;C_n\;\textbf{as}\;x_n\;\textbf{->}\;t_n$
	$\textbf{default}\;t_{n+1}$
$\Gamma\vdash t_0:D,\;\Gamma\vdash t_1:T_1,\;...,\;\Gamma\vdash t_n:T_n,\;\Gamma\vdash t_{n+1}:T_{n+1},$
$\overline{lub}(T_1,...,T_{n+1})\Rightarrow W,\;\Gamma\vdash t:W$
\end{lstlisting}
There are four rules by which $t'$ can be computed:  E-DISPATCH, E-DISPATCH-SUCC, E-DISPATCH-FAIL and E-DISPATCH-DEF. 
\begin{description}
	\item[(1)] $\;$
\begin{lstlisting}[aboveskip=0pt,belowskip=0pt]
$t'=\textbf{case}\;t'_0\;\textbf{of}$
	$\textbf{type}\;C_1\;\textbf{as}\;x_1\;\textbf{->}\;t_1$
	$...$
	$\textbf{type}\;C_n\;\textbf{as}\;x_n\;\textbf{->}\;t_n$
	$\textbf{default}\;t_{n+1}$
\end{lstlisting}
By induction hypothesis, $t_1 \rightarrow t'_1$ preserves the type. Therefore, by T-DISPATCH, $t':W$.
	\item[(2)] $t'=\lbrack x_1 \mapsto a\rbrack t_1,\;\Gamma\vdash t_1:T_1$. By Lemma~\ref{lemma:substitution}, substitution does not change the type of $t_1$. By Lemma~\ref{lemma:lub}, $T_1 <: W$ and therefore by rule T-SUB $t_1 : W$.
	\item[(3)] $\;$
\begin{lstlisting}[aboveskip=0pt,belowskip=0pt]
$t'=\textbf{case}\;a\;\textbf{of}$
	$\textbf{type}\;C_2\;\textbf{as}x_2\;\textbf{->}\;t_2$
	$...$
	$\textbf{type}\;C_n\;\textbf{as}x_n\;\textbf{->}\;t_n$
	$\textbf{default}\;t_{n+1}$
$\Gamma\vdash t_2:T_1,\;...,\;\Gamma\vdash t_n:T_n,\;\Gamma\vdash t_{n+1}:T_{n+1},$
$\overline{lub}(T_2,...,T_{n+1})\Rightarrow W',\;\Gamma\vdash t':W'$
\end{lstlisting}
The removal of the first case causes T-DISPATCH to assign type $t':W'$. Removal of $T_1$ makes $W'$ more specific then $W$, but $W' <: W$ holds. Therefore by, T-SUB $t':W$.
	\item[(4)] $t'=t_{n+1}\;\Gamma\vdash t_{n+1}:T_{n+1}$. By Lemma~\ref{lemma:lub}, $T_{n+1} <: W$, therefore by T-SUB $t' : W$. 
\end{description}

\item[(\underline{T-EQN})] $t_1\;\textbf{=}\;t_2,\;\Gamma\vdash t_1:C,\;\Gamma\vdash t_2:D,\;\Gamma\vdash t:\text{bool}$. There are $6$ different rules by which $t'$ can be computed: E-NOMINAL-TRUE, E-NOMINAL-FALSE, E-PRIM-TRUE, E-PRIM-FALSE, E-EQ1 and E-EQ2. 
\begin{description}
	\item[(1)] $t'=\text{true}$. Immediate by rule T-TRUE.
	\item[(2)] $t'=\text{false}$. Immediate by rule T-FALSE.
	\item[(3)] $t'=\text{true}$. Immediate by rule T-TRUE.
	\item[(4)] $t'=\text{false}$. Immediate by rule T-FALSE.
	\item[(5)] $t'=t'_1\textbf{=}t_2$. By induction hypothesis, $t_1 \rightarrow t'_1$. preserves the type. Therefore, by rule T-EQN, $t':\text{bool}$. 
	\item[(6)] $t'=v_1\textbf{=}t'_2$. By induction hypothesis, $t_2 \rightarrow t'_2$. preserves the type. Therefore, by rule T-EQN, $t':\text{bool}$. 
\end{description}

\item[(\underline{T-EQP})] $t_1\;\textbf{=}\;t_2,\;\Gamma\vdash t_1:\Pi_1,\;\Gamma\vdash t_2:\Pi_1$. Same as T-EQN. 

\item[(\underline{T-OBJ})] Vacuously fulfilled since $t=a$ is a value. 
\end{description}
\end{proof}

As a direct consequence of Theorems \ref{theorem:progress} and \ref{theorem:preservation}, a well-typed closed term does not get stuck during evaluation.


\section{Related work}
\label{sec:relatedWork}
$\lambda_{DL}$ is generally related to the integration of data models into programming languages. We consider four different ways of integrating such a data model: by using generic representations, by mappings into the target language, through a preprocessing step before compilation, or through language extensions or custom languages. 

\paragraph*{Generic representations}
Generic representations offer easy integration into programming
languages and have the advantage that they can represent anything the
data can model, e.g., generic representations (such as
DOM\footnote{\url{https://www.w3.org/DOM/}}) for
XML~\cite{DBLP:conf/icfp/WallaceR99}. This approach has also been
applied to semantic data. Representations can vary, however the most
popular ones include axiom-based approaches (e.g.,
\cite{DBLP:journals/semweb/HorridgeB11}), graph-based ones (e.g.,
\cite{DBLP:conf/www/CarrollDDRSW04}) or statement-based ones (e.g.,
RDF4J\footnote{\url{http://rdf4j.org/}}). All these approaches are
error-prone in so far that code on the generic representations is
not type-checked in terms of the involved conceptualizations.   

\paragraph*{Mappings}
Mapping approaches on the other hand use schematic information of the
data model to create types in the target language. Type checking can
be used thus to check the valid use of the derived types in
programs. This approach has been successfully used for
SQL~\cite{DBLP:conf/sigmod/ONeil08},
XML~\cite{DBLP:conf/icfp/WallaceR99,DBLP:conf/ssdgp/LammelM06,DBLP:conf/icoodb/AlagicB09},
and more
generally~\cite{DBLP:conf/gttse/LammelM06,DBLP:conf/popl/SymeBTMP12}. Naturally,
mappings have been studied in a semantic data context, too. The focus
is on transforming conceptual statements into types of the programming
language. Frameworks include
ActiveRDF~\cite{Oren:2008:AES:1412759.1412997},
Alibaba\footnote{\url{https://bitbucket.org/openrdf/alibaba}},
Owl2Java~\cite{DBLP:conf/seke/KalyanpurPBP04},
Jastor\footnote{\url{http://jastor.sourceforge.net/}},
RDFReactor\footnote{\url{http://semanticweb.org/wiki/RDFReactor}},
OntologyBeanGenerator\footnote{\url{http://protege.cim3.net/cgi-bin/wiki.pl?OntologyBeanGenerator}},
\`{A}gogo~\cite{AGOGO} and
LITEQ~\cite{DBLP:conf/semweb/LeinbergerSLSTV14}. However, mapping
approaches are problematic for semantic data. For one, the
transformation of
statements such as those shown in line 1 of Listing \ref{ax:example}
is not trivial due to the mixture of nominal and structural
typing. Extremely general information on domains and ranges of roles
such as $\text{\tt influencedBy}$ occurs frequently. The question
arises what types support such a role. Frameworks usually resolve the
situation by assigning the role to every type they create. In terms of
the codomain of the role, they usually assign the most general
available type and leave it to the developer to cast the values to
their correct types---this is an error-prone approach. Lastly, all mapping frameworks have problems with the high number of potential types in semantic data sources.   

\paragraph*{Precompilation}
A separate precompilation step, where the source code is statically
analyzed beforehand for DSL usage and then verified or transformed is
another way to solve the problem of integrating data models into
programming languages. Especially queries embedded in programming
languages can be verified in this manner. This approach has been
applied to, for example, SQL
queries~\cite{DBLP:journals/tosem/WassermannGSD07}. The approach has
been applied to semantic data in a limited manner~\cite{SWOBE}---for
queries that can be typed with primitive types such as integer.

\paragraph*{Language extensions and custom type systems}
The most powerful approaches extend existing languages or create new
type systems to accommodate the specific requirements of the data
model. Examples for such extensions are concerned with relationships
between objects~\cite{Bierman2005} and easy data access to relational
and XML data~\cite{export:77415}. 
Another example concerns programming language support for the XML data
model specifically in terms of regular expression type, as in the
languages CDuce~\cite{cduce} and XDuce~\cite{xduce}. While semantic
data can be seen as somewhat semi-structured and is often serialized
in XML, the XML-focused approaches do not address the logics-based
challenges regarding semantic data. Another related approach is the idea of functional logic programming~\cite{Hanus94JLP}. However, $\lambda_{DL}$ emphasizes type-checking on data axiomatized in logic over the integration of the logic programming paradigm into a language. Given its typecase constructs, $\lambda_{DL}$ is also related to other forms of  typecases~\cite{DBLP:journals/jfp/AbadiCPR95,DBLP:journals/jfp/CraryWM02,DBLP:conf/tldi/LammelJ03}. However, since semantic data cannot be adequately expressed with existing typing mechanics, these approaches cannot fully solve the problems. 

Language extensions and custom approaches have also been implemented
for semantic data. In one approach~\cite{DBLP:conf/esws/PaarV11}, the
C\# compiler was extended to allow for OWL and XSD types in C\#. The
main technical difference to $\lambda_{DL}$ is that $\lambda_{DL}$
makes use of the knowledge system for typing and subtyping
judgments. $\lambda_{DL}$ can therefore make use of inferred data and
has a strong typing mechanism. There is also work on custom languages
that use static type-checking for querying and light scripting in
order to avoid runtime
errors~\cite{DBLP:conf/ershov/CiobanuHS14,DBLP:journals/jlp/CiobanuHS15}. However,
the types are again limited in these cases, as they only consider explicitly given statements. Furthermore, they face the same difficulties as mapping approaches when it comes to schema information --- they rely on domain and range specifications for predicates to assign types. 


\section{Discussion and future work}
\label{sec:conclusion}

In this paper, we have motivated, introduced and studied a type system for semantic data that is built around concept expressions as types as well as queries in a simple $\lambda$-calculus. We have shown that by using conceptualizations as they are defined in the knowledge system itself, type safety can be achieved. This helps in writing less error-prone programs, even when facing knowledge systems that evolve. However, the work can be extended in several ways. 

\paragraph*{Gradual typing} 
A byproduct of achieving type safety are the rather hard restrictions
by the schema. This can be seen best for the $\text{\tt influencedBy}$
role as described in the example. The knowledge system could not prove
that $\text{\tt MusicArtist} \sqsubseteq \exists \text{\tt
  influencedBy}.\top$, therefore the influences of music artists could
not be computed directly. In the example, this was a correct choice as
not every music artist was influenced by something. But there are also
scenarios where it is reasonable to assume that, for a specific data
source, this will be the case even though the schema does not
explicitly state so, simply because schemata for semantic data strive
to be applicable to different and evolving data sources. Also, the
semantic data source may have been created by conversion from more constrained
data, e.g., in a SQL database. One way to include such background knowledge of a developer would be to adopt ideas originating from gradual typing~\cite{Siek06gradualtyping} for $\lambda_{DL}$. A `lenient' $\lambda$ could be introduced, which accepts values even though the subtyping relation cannot be proven. However, even in this case, one would still check if the intersection of the functions domain and of the value applied to the function is non-empty in order to avoid grave mistakes. 

\paragraph*{$\mathbf{\lambda_{DL}}$ and System F}
So far, we have only considered a simply typed $\lambda$-calculus for
the integration of semantic data into a functional language. However,
programming languages typically feature polymorphic definitions, e.g.,
for list-processing function combinators. A comprehensive integration 
of description logics and polymorphism (with
$\text{System}\;F_{<:}$~\cite{DBLP:conf/ifip/Reynolds83} as starting
point) including aspects of subtyping is not straightforward.


\paragraph*{Modification of the semantic data}
Of course, it is also desirable that semantic data can be modified by
an extended $\lambda_{DL}$. However, due to facts inferred by the knowledge system, this is non-trivial. Given the facts about music artists in Listing \ref{ax:example} and the goal to remove the (implicit) fact that the $\text{\tt beatles}$ have made a song. This cannot be removed directly. Instead, either the fact that the $\text{\tt beatles}$ are of type $\text{\tt MusicArtist}$ or the fact that they have been played by $\text{\tt coolFm}$ must be removed. In order to integrate modification of knowledge systems into $\lambda_{DL}$, the theory of knowledge revision based on AGM theory~\cite{Qi2006} must be considered and properly integrated into the language. 

\paragraph*{Enhanced querying}
Another area of future work concerns the query system. Queries, as
they are currently implemented, are limited in their expressive
power. A simple extension are queries for roles, such as
$\text{\tt influencedBy}$ that result in sets of pairs. Typing such
queries is possible via the addition of tuples to $\lambda_{DL}$. The
addition of query languages closer to the power of SQL is also
possible. The biggest challenge in this regard is query
subsumption. When such queries are typed in the programming language,
subsumption checks are necessary to determine whether a function can
be applied to query results. Therefore only query languages with
decidable query subsumption are to be considered (e.g.,
\cite{DBLP:conf/ijcai/BourhisKR15}).




\bibliographystyle{abbrvnat}
\bibliography{references}



\newpage
\appendix
\section{Appendix}


\subsection{Remaining reduction rules}

\begin{figure}[h!]
\centering
\begin{boxedminipage}{0.5\textwidth}
\vspace{2mm}
	\ax{\text{E-LETV}}
		{\<let> x=v_1 \|in| t_2 \rightarrow \lbrack x \mapsto v_1\rbrack t_2}

	\ir{\text{E-LET}}
		{t_1 \rightarrow t'_1}
		{\<let> x=t_1 \|in| t_2 \rightarrow \<let> x=t'_1 \|in| t_2}

	\ax{E-FIXV}
		{\<fix> (\lambda x:T_1.t_2) \rightarrow \lbrack x \mapsto (\<fix> (\lambda x:T_1.t_2)) \rbrack t_2}

	\ir{E-FIX}
		{t_1 \rightarrow t'_1}
		{\<fix> t_1 \rightarrow  \<fix> t'_1}

	\ir{\text{E-APP1}}
		{t_1 \rightarrow t'_1}
		{t_1 t_2 \rightarrow t'_1 t_2}

	\ir{\text{E-APP2}}
		{t_2 \rightarrow t'_2}
		{v_1 t_2 \rightarrow v_1 t'_2}

	\ax{\text{E-APPABS}}
		{(\lambda x:T.t_1)\;v_2 \rightarrow \lbrack x \mapsto v_2 \rbrack t_1}

	\ax{\text{E-IF-TRUE}}{\<if> \text{true} \|then| t_2 \|else| t_3 \rightarrow t_2}
	\ax{\text{E-IF-FALSE}}{\<if> \text{false}\|then| t_2 \|else| t_3 \rightarrow t_3}

	\ir{\text{E-IF}}
		{t_1 \rightarrow t'_1}
		{\<if> t_1 \|then| t_2 \|else| t_3 \rightarrow \<if> t'_1 \|then| t_2 \|else| t_3}
\end{boxedminipage}
\caption{Reduction rules for constructs unrelated to KB.}
\label{fig:normalSemantics}
\end{figure}

\begin{figure}[h!]
\centering
\begin{boxedminipage}{0.5\textwidth}
	\ir{\text{E-CONS1}}
		{t_1 \rightarrow t'_1}
		{\textbf{cons}\; t_1\; t_2 \rightarrow \textbf{cons}\; t'_1\; t_2}

	\ir{\text{E-CONS2}}
		{t_2 \rightarrow t'_2}
		{\textbf{cons}\; v_1\; t_2 \rightarrow \textbf{cons}\; v_1\; t'_2}

	\ax{\text{E-NULL-TRUE}}
		{\<null> \textbf{nil} \rightarrow \text{true}}

	\ax{\text{E-NULL-FALSE}}
		{\<null> \textbf{cons}\;v_1\;v_2 \rightarrow \text{false}}

	\ir{\text{E-NULL}}
		{t_1 \rightarrow t'_1}
		{\<null> t_1 \rightarrow \<null> t'_1}

	\ax{\text{E-HEADV}}
		{\<head> \textbf{cons}\;v_1\;v_2 \rightarrow v_1}

	\ir{\text{E-HEAD}}
		{t_1 \rightarrow t'_1}
		{\<head> t_1 \rightarrow \<head> t'_1}

	\ax{\text{E-TAILV}}
		{\<tail> \textbf{cons}\;v_1\;v_2 \rightarrow v_2}

	\ir{\text{E-TAIL}}
		{t_1 \rightarrow t'_1}
		{\<tail> t_1 \rightarrow \<tail> t'_1}
\end{boxedminipage}
\caption{Reduction rules for lists.}
\label{fig:listSemantics}
\end{figure}

\newpage
\subsection{Greatest-lower bound}
\begin{figure}[h]
\centering
\begin{boxedminipage}{0.45\textwidth}
	\ax{\text{GLB-PRIMITIVE}}
		{glb(\pi_1,\pi_1) \Rightarrow \pi_1}

	\ax{\text{GLB-CONCEPT}}
		{glb(C,D)\Rightarrow C \sqcap D}

	\ir{\text{GLB-LIST}}
		{glb(S,T) \Rightarrow W}
		{glb(S\;\text{list},T\;\text{list}) \Rightarrow W\;\text{list}}

	\ir{\text{GLB-FUNC}}
		{lub(S_1,T_1)\Rightarrow W_1 \dnp glb(S_2, T_2) \Rightarrow W_2}
		{glb(S_1 \rightarrow S_2, T_1 \rightarrow T_2) \Rightarrow W_1 \rightarrow W_2}		
\end{boxedminipage}
\caption{Greatest lower bound of types.}
\label{fig:glb}
\end{figure}

\subsection{Prototypical implementation}

A prototypical implementation, showing the feasibility of $\lambda_{DL}$ is available at \url{http://west.uni-koblenz.de/de/lambda-dl}. The interpreter itself is written in F\# while relying on a Java-based HermiT reasoner for inferencing. Most of the interpreter is based on the approach shown by \cite{Pierce:2002:TPL:509043}. An important difference (besides the actual rules) is that evaluation and typing functions take a knowledge base as an additional parameter. In case of HermiT knowledge bases, a wrapper is passed to those functions. The wrapper serializes queries issued by the evaluation and typing functions and calls a Java program, which then in turn deserializes the queries and calls the reasoner. 


\end{document}